\theoremstyle{plain}
\newtheorem{theorem}{Theorem}[section]
\newtheorem{proposition}[theorem]{Proposition}
\newtheorem{lemma}[theorem]{Lemma}
\newtheorem{corollary}[theorem]{Corollary}
\theoremstyle{definition}
\newtheorem{definition}[theorem]{Definition}
\theoremstyle{remark}
\newtheorem{remark}[theorem]{Remark}
\newtheorem{example}[theorem]{Example}
\newtheorem{axiom}{Axiom}
\newtheoremstyle{colored}
  {3pt}{3pt}%
  {\color{blue}}
  {}
  {\color{red}\bfseries}
  {.}
  { }
  {}
\theoremstyle{colored}
\numberwithin{equation}{section}
\numberwithin{theorem}{section}
\renewcommand{\epsilon}{\varepsilon}
\renewcommand{\hat}{\widehat}
\newcommand\rmv[1]{\textcolor{BrickRed}{#1}}
\newcommand\mic[1]{\textcolor{Orange}{#1}}
\title{Coopetitive Index: a measure of cooperation and competition in coalition formation} 
\author{Michele Aleandri, Marco Dall'Aglio}
\date{November 19, 2025}
\begin{document}

\maketitle

\begin{abstract}
    We extend the coopetition index introduced by Aleandri and Dall’Aglio (2025) for simple games to the broader class of monotone transferable-utility (TU) games and to all non-empty coalitions, including singletons. The new formulation allows us to define an absolute coopetition index with a universal range in \([-1,1]\), facilitating meaningful comparisons across coalitions.
    
We study several notable instances of the index, including the Banzhaf, Uniform Shapley, and Shapley–Owen coopetition indices, and we derive explicit formulas that connect coopetition to classical semivalues. Finally, we provide axiomatic characterizations of the Uniform Shapley and Shapley–Owen versions, showing that each is uniquely determined by linearity, symmetry over pure bargaining games, external null player neutrality, and a contraction axiom reflecting its internal distribution. These results position the coopetition index as a versatile tool for quantifying the cooperative and competitive tendencies of coalitions in TU-games.
\end{abstract}
\section{Introduction}
Cooperation and competition coexist in many economic, political, and social environments where groups of agents form coalitions to pursue common objectives. Understanding the internal cohesion of a coalition, and whether its members behave cooperatively or competitively, is therefore essential for interpreting outcomes in cooperative game theory.
In previous work, Aleandri and Dall’Aglio \cite{aleandri2025coopindex} introduced a coopetition index for simple games, capturing the tendency of a coalition to act as a unified entity or to split into subcoalitions. The index relies on two probability families: an internal distribution describing how the coalition may divide itself, and an external distribution describing how the coalition interacts with the rest of the players.

In this paper we extend the coopetition index in two directions. First, we generalize its definition to the broader class of monotone TU-games, allowing us to evaluate cooperation and antagonism in settings where coalition worths are not restricted to binary values. Second, we expand the domain to include all non-empty coalitions, in particular singletons, which naturally bridge the coopetition index with generalized semivalues.
We study the structural properties of the extended index, establish tight bounds, and show how it relates to classical group values. Finally, we provide axiomatic characterizations of two important instances: the Uniform Shapley and the Shapley–Owen coopetition indices.

\section{Basic Notation}\label{sec:notation}

We denote the cardinality of a finite set with the corresponding small letter: e.g. \(n=|N|\), \(s=|S| \), \(t=|T|\). Given a set $A\subseteq N$, its complement is defined as $A^c=N\setminus A$. Also, 
we will often omit braces for singletons, e.g. writing $v(i), S \backslash i$ instead of $v(\{i\}), S \backslash\{i\}$. Similarly, we will write $i j, ijk$ instead of $\{i, j\}, \{i,j,k\}$. 

\begin{definition} A cooperative game with transferable utility or TU-game is a pair $(N, v)$, where $N=\{1,2, \ldots, n\}$ with $n \in \mathbb{N}$ is the set of players, and $v$ is a function assigning to each coalition $S$, i.e. to each subset $S \subseteq N$, a real number $v(S)$, such that $v(\emptyset)=0$. The function $v$ is called the characteristic function and $v(S)$ is called the worth of $S$. The coalition $N$ is called the grand coalition. The TU-game \(v\) is {\em monotone} if \(v(S) \leq v(T)\) whenever \(S \subseteq T \subseteq N\). A monotone TU-game is {\em simple} if \(v(S) \in \{0,1\}\), for \(S \subseteq N\), and \(v(N)=1\).
\end{definition}

 The marginal contribution of coalition $S$ to coalition $T$, $T\subseteq N\setminus  S$, is defined as:
	\[
	v_S'(T)= v(S \cup T )-v(T ). 
	\]
	 A player $i$ such that $v_i'(T)=0$, for all $T\subseteq N\setminus\{i\}$, is called \emph{null player}.\\
    
 Call \(\mathcal{G}^N\) the class of games defined on a set of players \(N\). Given a game \(v \in \mathcal{G}^N\) and a coalition \(A \subseteq N\), the game \(v^{-A} \in \mathcal{G}^{N \setminus A}\) (\(v\) without \(A\) ) is the restriction of \(v\) to \(N \setminus A\):
\[
v^{-A}(S)=v(S), \qquad \mbox{for any } S \subseteq N \setminus A .
\]
For any coalition \( B \subseteq A\), the game \(v^{-A}_{\cup B} \in \mathcal{G}^{N \setminus A}\) ( $v$ without \(A\) in the presence of \(B\) ) is defined as
\[
v^{-A}_{\cup B}(S)=v(S \cup B)-v(B),    \qquad \mbox{for any } S \subseteq N \setminus A .
\]   
	We recall the definitions of the probabilistic generalized value  and generalize semivalue given by Marichal et al. \cite{marichal2007axiomatic}. 

\begin{definition}
  Given a finite set $N$, consider a family of probability distributions \(q^N=\{q_S^N\}_{S\subseteq N}\), where, for each $S\subseteq N$, $q_S^N$ is a probability distribution over the set $N\setminus S$. A \emph{probabilistic generalized value} of a coalition $S \subseteq N$ is defined by
$$
\Phi_{q^N}^v(S)=\sum_{T \subseteq N \backslash S} q^N_S(T)v_S'(T).
$$
\\
A \emph{generalized semivalue} is a probabilistic generalized value such that, additionally, for any $S \subseteq N$, the probability $q_S^N(T)$, $\forall T\subseteq N\setminus S$, depends only on the cardinalities of the coalitions $S, T$, and $N$, i.e., for any $s \in\{0, \ldots, n\}$, there exists a family of nonnegative real numbers $\left\{q_s^n(t)\right\}_{t=0, \ldots, n-s}$ fulfilling
$$
\sum_{t=0}^{n-s}\binom{n-s}{t} q_s^n(t)=1,
$$
such that, for any $S \subseteq N$ and any $T \subseteq N \backslash S$, we have $q_S^N(T)=q_s^n(t)$.
\end{definition}

Notable examples of generalized semivalues are the Shapley Generalized Value, obtained by setting \(q_s^n(t)=\frac{(n-s-t)!t!}{(n-s-1)!}\) and the Banzhaf Generalized value when \(q_s^n(t)=2^{s-n}\).

 Here and throughout, when no ambiguity arises, we omit the dependence on $N$, and \(q\) will denote a family of probability distributions $\{q_{S}\}_{S\subseteq N}$, where each $q_S$ is a probability distributions over $2^{N\setminus S}$.

\section{Extension of the coopetition index to singletons}
The coopetition index of a coalition was introduced in \cite{aleandri2025coopindex} for coalitions of 2 or more players in the context of simple games. In this section we extend the formula to singletons and to monotone TU-games. The first extension is achieved by means of a different formulation of the  formula already available.

Take a coalition $S\subseteq N$, with $|S|\geq 2$,  and call $\Pi_2(S)$ the set of non-trivial (i.e.\ non-empty) $2$-partitions, i.e.  
$$\Pi_2(S)=\big\{ \pi=\{S_1,S_2\}:\,\emptyset \neq S_1,S_2\subset S,\, S_1\cup S_2 = S \mbox{ and } S_1\cap S_2 =\emptyset  \big\}.$$  We begin by reconsidering the attitude of a coalition towards another non-overlapping one. 

\begin{definition}
    \label{def:new_Attitude}
Given a finite set $N$, take a family of probability distribution  $p^N=\{p_S^N\}_{S\subseteq N}$, where each $p_S^N$ is a probability distributions over the set $\Pi_2(S)$. The \emph{attitude} of coalition $S \subseteq N$ towards $T \subseteq N \setminus S$ under the family \(p^N\) is defined as
\begin{multline}
    \label{eq:new_Attitude}
\mathcal{A}^v_{p^N(S,T)} = v(S \cup T) - v(T) \\ -\sum_{\pi=\{S_1,S_2\} \in \Pi_2(S)} p_S^N(\pi)\big( v(S_1 \cup T) + v(S_2 \cup T) - 2v(T) \big),
\end{multline}
with the implicit assumption that when \(|S|=1\), \(\Pi_2(S)=\emptyset\) and the sum is null. In such a case, \(\mathcal{A}^v_{p^N}(S,T) = v(S \cup T) - v(T)\) whatever \(p\) we are considering.
\end{definition}

In what follows, when no ambiguity arises, we omit the dependence on $N$, and \(p\) will always indicate a family of probability distributions \(\{p_S\}_{S\subseteq N}\), where each \(p_S\) is a probability distribution on \(\Pi_2(S)\). 
The definition can be written more explicitly by singling out the case of singletons from all the other ones.

\begin{proposition}
    \label{prop:attitude_SingOut}
    For a given family of distributions \(p\), $S \subseteq N$ and $T \subseteq N \setminus S$, we can write
    \begin{equation}
        \label{eq:attitude_SingOut}
        \mathcal{A}^v_p(S,T) = \begin{cases}
            v(S \cup T) - v(T), & \mbox{if } |S|=1;
            \\
            v(S \cup T) + v(T) - \sum_{\pi=\{S_1,S_2\} \in \Pi_2(S)} p_S(\pi)\left( v(S_1 \cup T) + v(S_2 \cup T)   \right), & \mbox{otherwise.} 
        \end{cases}
    \end{equation}
\end{proposition}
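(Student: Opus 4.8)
The plan is to simply unwind Definition~\ref{def:new_Attitude} and rearrange terms. The key observation is that for any fixed $T$, the quantity $\sum_{\pi \in \Pi_2(S)} p_S(\pi)$ equals $1$, since $p_S$ is a probability distribution on $\Pi_2(S)$; this is what lets us collapse the repeated $-2v(T)$ terms into a single $+v(T)$.

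First I would dispose of the singleton case: when $|S|=1$, by the stated convention $\Pi_2(S)=\emptyset$ and the sum in \eqref{eq:new_Attitude} is empty, so $\mathcal{A}^v_p(S,T) = v(S\cup T)-v(T)$ directly, matching the first branch of \eqref{eq:attitude_SingOut}. Then for the case $|S|\geq 2$, I would start from \eqref{eq:new_Attitude} and expand the sum term by term:
\begin{align*}
\mathcal{A}^v_p(S,T) &= v(S\cup T) - v(T) - \sum_{\pi=\{S_1,S_2\}\in\Pi_2(S)} p_S(\pi)\big(v(S_1\cup T)+v(S_2\cup T)\big) \\
&\qquad + 2v(T)\sum_{\pi\in\Pi_2(S)} p_S(\pi).
\end{align*}
Using $\sum_{\pi\in\Pi_2(S)} p_S(\pi)=1$, the last term is $2v(T)$, and combining with the $-v(T)$ gives $+v(T)$, which yields exactly the second branch of \eqref{eq:attitude_SingOut}.

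There is essentially no obstacle here: the statement is a cosmetic reformulation, and the only thing to be careful about is invoking that $p_S$ is a genuine probability distribution (so its masses sum to one) and that this is legitimate precisely because $|S|\geq 2$ guarantees $\Pi_2(S)\neq\emptyset$, so the normalization makes sense. I would present the argument as a short two-case computation with the probability-normalization step flagged explicitly.
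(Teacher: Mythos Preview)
Your proposal is correct and matches the paper's treatment: the paper does not give a separate proof of this proposition, treating it as an immediate rewriting of Definition~\ref{def:new_Attitude}, and your two-case computation using $\sum_{\pi\in\Pi_2(S)} p_S(\pi)=1$ is exactly the routine verification one performs.
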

In its first appearance, the attitude was defined for simple games. The notion can be applied to monotone TU-games at no additional cost. Some care is needed in its interpretation, starting from the definition of two important cases for the index.

\begin{definition}
    Consider a monotone TU-game \(v\) defined on \(N\). If two coalitions \(S\) and \(T \subseteq N \setminus S\) are such that \(v(S \cup T) - v(T) > 0\), coalition \(S\) is called a \(T\)\emph{-contributing coalition}. If the \(T\)-contributing coalition \(S\) is such that \(v(P \cup T) = v(T)\) for every proper subcoalition \(P \subsetneq S\), \(S\) is an {\em essential} \(T\)-contributing coalition. If, instead, \(v(P \cup T) = v( S \cup T)\) for every nonempty subcoalition \(\emptyset \neq P \subseteq S\), then \(S\) is a {\em fully complementary} \(T\)-contributing coalition.
\end{definition} 

The following result shows that the previous definition singles out the range endpoints for the attitude.
\begin{proposition} \label{prop:-1+1}
    Let \(S \subseteq N\)  and $T\subseteq N \setminus S$, with \(S\) a \(T\)-contributing coalition. Let \(p_S\) be a probability distribution with full support over $\Pi_2(S)$, i.e. \(p_S(\pi)>0\) for every \(\pi \in \Pi_2(S)\). Then 
    \begin{itemize}
        \item[i)] \(S\) is an essential \(T\)-contributing coalition if and only if $ \mathcal{A}^v_{p}(S,T)=v(S \cup T) - v(T)$
        \item[ii)] \(S\) is a fully complementary \(T\)-contributing coalition if and only if $ \mathcal{A}^v_{p}(S,T)=v(T) - v(S \cup T)$. 
    \end{itemize}
\end{proposition}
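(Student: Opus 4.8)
The plan is to exploit the explicit second-case formula from Proposition~\ref{prop:attitude_SingOut} together with monotonicity. Write $s=|S|\geq 2$ (the case $|S|=1$ is trivial since then $\Pi_2(S)=\emptyset$, but note that a singleton is automatically essential, so the claimed equivalence still holds). For $s\geq 2$, using
\[
\mathcal{A}^v_{p}(S,T) = v(S \cup T) + v(T) - \sum_{\pi=\{S_1,S_2\} \in \Pi_2(S)} p_S(\pi)\bigl( v(S_1 \cup T) + v(S_2 \cup T)   \bigr),
\]
I would rewrite the difference $v(S\cup T)-v(T)-\mathcal{A}^v_p(S,T)$ and the sum $\mathcal{A}^v_p(S,T)-\bigl(v(T)-v(S\cup T)\bigr)$ as convex combinations, over $\pi\in\Pi_2(S)$ with weights $p_S(\pi)>0$, of nonnegative quantities, so that the sum vanishes iff each term vanishes.

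For part~(i): compute
\[
\bigl(v(S \cup T) - v(T)\bigr) - \mathcal{A}^v_{p}(S,T) = \sum_{\pi=\{S_1,S_2\}\in\Pi_2(S)} p_S(\pi)\Bigl( v(S_1 \cup T) + v(S_2 \cup T) - 2v(T)\Bigr).
\]
By monotonicity each summand $v(S_1\cup T)+v(S_2\cup T)-2v(T)\geq 0$, and since $p_S$ has full support the left side is $0$ iff $v(S_i\cup T)=v(T)$ for every part $S_i$ appearing in some $\pi\in\Pi_2(S)$. Every nonempty proper subcoalition $P\subsetneq S$ appears as one block of some $2$-partition, so this is equivalent to $v(P\cup T)=v(T)$ for all $\emptyset\neq P\subsetneq S$; combined with the standing hypothesis that $S$ is $T$-contributing, this says precisely that $S$ is an essential $T$-contributing coalition. (The implication showing essentiality forces the nonnegativity to be a genuine sum of zeros is the routine direction; the converse uses full support.)

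For part~(ii): compute
\[
\mathcal{A}^v_{p}(S,T) - \bigl(v(T) - v(S \cup T)\bigr) = \sum_{\pi=\{S_1,S_2\}\in\Pi_2(S)} p_S(\pi)\Bigl( 2v(S\cup T) - v(S_1 \cup T) - v(S_2 \cup T)\Bigr).
\]
Here each summand $2v(S\cup T)-v(S_1\cup T)-v(S_2\cup T)\geq 0$ again by monotonicity (since $S_i\cup T\subseteq S\cup T$), so the same full-support argument gives that the left side is $0$ iff $v(P\cup T)=v(S\cup T)$ for every nonempty $P\subsetneq S$; the boundary case $P=S$ is trivial, so this is exactly the fully complementary condition.

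The main obstacle — really the only point requiring care — is the combinatorial claim that ranging over all blocks $S_1,S_2$ of all $\pi\in\Pi_2(S)$ recovers exactly the family of nonempty proper subcoalitions of $S$; this needs $s\geq 2$ and is where one checks that no proper subcoalition is missed (each $\emptyset\neq P\subsetneq S$ pairs with $S\setminus P$ to form a valid non-trivial $2$-partition). Everything else is a direct consequence of monotonicity and the positivity of the weights $p_S(\pi)$.
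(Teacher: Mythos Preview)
Your proof is correct and follows essentially the same approach as the paper's own argument: both express the gap between $\mathcal{A}^v_p(S,T)$ and the extremal value as a $p_S$-weighted sum of nonnegative terms (nonnegative by monotonicity), then use full support to conclude that the sum vanishes iff every term does, which forces $v(P\cup T)$ to the required value for every nonempty proper $P\subsetneq S$. Your version is more explicit (writing out both differences, noting the combinatorial fact that every such $P$ arises as a block), whereas the paper compresses this into a couple of sentences and leaves the ``if'' direction and case~(ii) to the reader.
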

\begin{proof}
If \(|S|=1\), then \(S\) is always an essential \(T\)-contributing coalition. Assume now \(|S| \geq 2\). Now if $\mathcal{A}^v_{p}(S,T)=v(S \cup T) - v(T)$, it means that \(v(S_1 \cup T) = v(S_2 \cup T) = v(T) \) for all $\{S_1,S_2\}\in\Pi_2(S)$. This implies that $S$ is an essential \(T\)-contributing coalition.  The opposite case, i.e.\  $\mathcal{A}^v_{p}(S,T)=v(T)-v(S \cup T)$ is treated similarly.  The "if" part is easy to obtain and left to the reader.
\end{proof}

We are now ready to define the indices that measure the cooperation and antagonism of players to form a coalition. 
\begin{definition}
    \label{def:gCoopIndex}
Take two families of  probability distribution $p$ and $q$. The \emph{coopetition index } for coalition $S\subseteq N$ is defined as
\begin{equation*}
    \mathcal{C}^v_{p,q}(S) = \sum_{T\subseteq N\setminus S} q_S(T)\mathcal{A}^v_{p}(S,T).
\end{equation*}
\end{definition}

To distinguish between the two families of distributions that define the index, we will refer to \(p\) as the {\em internal} distribution, and to \(q\) as the {\em external} probability distribution family.

The new definition of coopetition index includes the singletons. For these sets the family \(p\) has empty support, and the coopetition index coincides with the semivalues originated from the same family \(q\). If \(S\) has two or more players, the attitude of \(S\) towards another coalition \(T\) ranges between \(-v'_S(T) \) and \(v'_S(T)\). Not surprisingly, coalitions of size 2 or larger are bound by the generalized group value defined by the same family \(q\) and its opposite. We gather these findings in the next result.

\begin{proposition} \label{prop:coptbounds}
    For any game \(v\), any families \(p\) and \(q\), and any \(i \in N\), the following holds: 
    \begin{enumerate}[i)]
        \item If \(S=i\), then
            \begin{equation}
        \label{eq:singleton_coincidence}
\mathcal{C}^v_{p,q}(i)=\Phi^v_q(i).        
    \end{equation}
    \item  If $|S|\geq 2$, then
    \begin{equation*}
        -\Phi^v_q(S) \leq \mathcal{C}^v_{p,q}(S) \leq \Phi^v_q(S) \; .
    \end{equation*}
    \end{enumerate}
\end{proposition}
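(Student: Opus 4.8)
The plan is to handle the two cases separately, both of which reduce to manipulating Definition~\ref{def:gCoopIndex} together with the explicit form of the attitude given in Proposition~\ref{prop:attitude_SingOut}.

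For part i), when $S=i$ is a singleton we have $\Pi_2(S)=\emptyset$, so by the convention in Definition~\ref{def:new_Attitude} (equivalently, the first line of \eqref{eq:attitude_SingOut}) the attitude is simply $\mathcal{A}^v_p(i,T)=v(i\cup T)-v(T)=v'_i(T)$ for every $T\subseteq N\setminus i$, independently of $p$. Substituting into Definition~\ref{def:gCoopIndex} gives $\mathcal{C}^v_{p,q}(i)=\sum_{T\subseteq N\setminus i}q_i(T)v'_i(T)$, which is precisely the probabilistic generalized value $\Phi^v_q(i)$. This is immediate once the conventions are unwound.

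For part ii), fix $S$ with $|S|\geq 2$ and a coalition $T\subseteq N\setminus S$. The key step is a pointwise (in $T$) bound on the attitude: I claim $-v'_S(T)\leq \mathcal{A}^v_p(S,T)\leq v'_S(T)$. For the upper bound, starting from \eqref{eq:new_Attitude} it suffices to show that $\sum_{\pi=\{S_1,S_2\}\in\Pi_2(S)}p_S(\pi)\bigl(v(S_1\cup T)+v(S_2\cup T)-2v(T)\bigr)\geq 0$; since $p_S$ is a probability distribution, this follows from monotonicity of $v$, which gives $v(S_j\cup T)\geq v(T)$ term by term. For the lower bound, I rewrite $\mathcal{A}^v_p(S,T)+v'_S(T)=2\bigl(v(S\cup T)-v(T)\bigr)-\sum_{\pi}p_S(\pi)\bigl(v(S_1\cup T)+v(S_2\cup T)-2v(T)\bigr)$ and observe that each bracketed term $v(S_1\cup T)+v(S_2\cup T)-2v(T)$ is at most $2\bigl(v(S\cup T)-v(T)\bigr)$, again by monotonicity (each $v(S_j\cup T)\leq v(S\cup T)$); averaging against $p_S$ and using $\sum_\pi p_S(\pi)=1$ yields $\mathcal{A}^v_p(S,T)+v'_S(T)\geq 0$. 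Hence $|\mathcal{A}^v_p(S,T)|\leq v'_S(T)$ for every $T$.

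Finally I multiply the inequality $-v'_S(T)\leq\mathcal{A}^v_p(S,T)\leq v'_S(T)$ by $q_S(T)\geq 0$ and sum over $T\subseteq N\setminus S$; by Definition~\ref{def:gCoopIndex} the middle sum is $\mathcal{C}^v_{p,q}(S)$ and the outer sums are $\pm\sum_{T}q_S(T)v'_S(T)=\pm\Phi^v_q(S)$, giving the claimed chain $-\Phi^v_q(S)\leq\mathcal{C}^v_{p,q}(S)\leq\Phi^v_q(S)$. I do not expect a genuine obstacle here; the only point requiring care is making sure the pointwise bound uses monotonicity in the two different directions ($v(S_j\cup T)\geq v(T)$ for the upper bound, $v(S_j\cup T)\leq v(S\cup T)$ for the lower one), and that both are valid because $T\subseteq S_j\cup T\subseteq S\cup T$.
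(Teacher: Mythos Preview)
Your proof is correct and follows essentially the same approach as the paper: part~i) is immediate from the definitions, and for part~ii) the paper states just before the proposition that the attitude of \(S\) towards \(T\) ranges between \(-v'_S(T)\) and \(v'_S(T)\), then defers the details to the earlier reference---you have supplied precisely those details, using monotonicity in both directions to obtain the pointwise bound and then averaging against \(q_S\).
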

\begin{proof}
For \(S=i\), simply apply the definition. For \(|S| \geq 2 \)
consider the same reasoning of the proof of Proposition 3.5 in Aleandri and Dall'Aglio \cite{aleandri2025coopindex}.
\end{proof}

The bounds are attainable in special circumstances.
\begin{corollary}
\label{cor:attain}
For any given pair of families of probability distributions \(q\) and \(p\), the following holds:
    \begin{enumerate}[i)]
        \item If every time \(S\) is \(T\)-contributing for some coalition \(T \subseteq N \setminus S\) it is essential \(T\)-contributing for the same coalition, then \(\mathcal{C}_{pq}^v(S) = \Phi_q^v(S)\).
        \item If every time \(S\) is \(T\)-contributing for some coalition \(T \subseteq N \setminus S\) it is fully complementary \(T\)-contributing for the same coalition, then \(\mathcal{C}_{pq}^v(S) = - \Phi_q^v(S)\).
    \end{enumerate}
\end{corollary}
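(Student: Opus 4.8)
The plan is to reduce Corollary~\ref{cor:attain} to a coalition-wise application of Proposition~\ref{prop:-1+1}, using the defining expression $\mathcal{C}^v_{p,q}(S) = \sum_{T\subseteq N\setminus S} q_S(T)\,\mathcal{A}^v_{p}(S,T)$ from Definition~\ref{def:gCoopIndex} together with the definition of the generalized value $\Phi^v_q(S) = \sum_{T\subseteq N\setminus S} q_S(T)\,v'_S(T)$. First I would split the index sum over $T$ into the terms where $S$ is $T$-contributing (i.e.\ $v'_S(T)>0$) and the terms where $v'_S(T)=0$. Note that $v'_S(T)\geq 0$ everywhere by monotonicity, so these are the only two cases.

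Next, for part i), I would argue term by term. If $v'_S(T)=0$, then since $\mathcal{A}^v_{p}(S,T)$ is bounded in absolute value by $v'_S(T)$ (the remark just before Proposition~\ref{prop:coptbounds}, or equivalently the bounds of Proposition~\ref{prop:-1+1} applied trivially), we get $\mathcal{A}^v_{p}(S,T)=0=v'_S(T)$. If instead $S$ is $T$-contributing, the hypothesis says $S$ is essential $T$-contributing, so Proposition~\ref{prop:-1+1}(i) gives $\mathcal{A}^v_{p}(S,T)=v(S\cup T)-v(T)=v'_S(T)$ --- here one should note that Proposition~\ref{prop:-1+1} requires $p_S$ to have full support, which is why the statement is phrased for arbitrary $p$; if $p_S$ does not have full support one instead invokes the trivial direction (the ``if'' part), since essentiality forces $v(S_1\cup T)=v(S_2\cup T)=v(T)$ for \emph{all} $\pi=\{S_1,S_2\}\in\Pi_2(S)$, making every summand in \eqref{eq:new_Attitude} vanish regardless of the weights $p_S(\pi)$. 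In either case $\mathcal{A}^v_{p}(S,T)=v'_S(T)$ for every $T\subseteq N\setminus S$, and summing against $q_S(T)$ yields $\mathcal{C}^v_{p,q}(S)=\Phi^v_q(S)$. Part ii) is identical except that on $T$-contributing coalitions the fully complementary hypothesis and Proposition~\ref{prop:-1+1}(ii) give $\mathcal{A}^v_{p}(S,T)=v(T)-v(S\cup T)=-v'_S(T)$, so the whole sum picks up a global minus sign.

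The only genuine subtlety --- and the main point to handle carefully rather than a real obstacle --- is the full-support caveat in Proposition~\ref{prop:-1+1}: the corollary as stated allows \emph{any} family $p$, so I cannot blindly cite that proposition and must instead observe that the ``essential $\Rightarrow$ attitude $= v'_S(T)$'' and ``fully complementary $\Rightarrow$ attitude $= -v'_S(T)$'' implications hold for an arbitrary $p_S$ (only the converse implications need full support). Once that is noted the argument is a direct, term-by-term substitution into the definition of $\mathcal{C}^v_{p,q}$, with no further computation required.
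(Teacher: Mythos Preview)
Your argument is correct and is precisely the reasoning the paper has in mind: the corollary is stated without proof as an immediate consequence of Proposition~\ref{prop:-1+1} and the definition of $\mathcal{C}^v_{p,q}$, and your term-by-term substitution is exactly how one fills in the details. Your observation that only the ``if'' direction of Proposition~\ref{prop:-1+1} is needed---and that this direction holds for arbitrary $p_S$, not just full-support ones---is the right way to handle the one subtlety, and the monotonicity argument for the non-contributing terms ($v'_S(T)=0 \Rightarrow v(S_i\cup T)=v(T)$ for every subcoalition $S_i$) is likewise correct.
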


Measuring both the generalized group value and the coopetition index enables us to give an informative picture of the coalition strength and cohesion. Additionally, the previous result also shows that a ratio between the two indices provides an absolute coopetition index which has a common range for any coalition.

\begin{definition}
Take any pair of families of probability distributions \(q\) and \(p\). The \emph{absolute coopetition index } for coalition $S\subseteq N$ is defined as
\begin{equation*}
    \hat{\mathcal{C}}^v_{p,q}(S) = \begin{cases}
        \cfrac{\mathcal{C}^v_{p,q}(S)}{\Phi^v_q(S)}& \mbox{ if } \Phi^v_q(S)\neq 0\\
        0 & \mbox{otherwise}
    \end{cases}
\end{equation*}
where \(\Phi^v_q(S)\) is the probabilistic generalized value of \(S\).
\end{definition}

Clearly, Proposition \ref{prop:coptbounds} implies that, for every \(i \in N\), \(\hat{\mathcal{C}}^v_{p,q}(i)=1\), while for every \(S \subseteq N\) with \(|S| \geq 2\),
\[
-1 \leq \hat{\mathcal{C}}^v_{p,q}(S) \leq 1,
\]
and Corollary \ref{cor:attain} provides the conditions under which the bounds are attained.\\

In summary, the coopetition index offers a quantitative measure of the internal dynamics within a coalition, revealing whether the members are more inclined towards unified action (cooperation) or internal division and individual pursuits (competition), with the absolute index providing a standardized scale from -1 to 1. Its interpretation is enriched by considering the coalition's power, often measured by the generalized group value.

\subsection{Most Significant cases}
In Aleandri and Dall'Aglio \cite{aleandri2025coopindex} some coopetition indices have already been defined and used -- both in theory and in applications. The given definitions rely on a family of internal distributions \(p\) and external ones \(q\) that are widely employed in the context of cooperative games. To start with, we consider distribution families that are uniform over their support. The uniform distribution family \(q^u \) defines, for each \(S \subseteq N\), a distribution over the power set of \(N\setminus S\) that  is simply given by
\begin{equation}
\label{def:q_unif}
    q^u_S(T)=\frac{1}{2^{n-s}}, \qquad \mbox{for } T \subseteq N \setminus S .
\end{equation}
    The uniform family \(p^u\) defines, for each \(S \subseteq N\), a distribution over \(\Pi_2(S)\) given by
    \begin{equation}
        \label{def:p_unif}
 p^u_S(T)=\frac{1}{2^{s-1}-1}, \qquad \mbox{for } \pi \in \Pi_2(S) .
    \end{equation}
The combination of the two families yields the Banzhaf coopetition index
\(\mathcal{C}_{Bz}^v = \mathcal{C}_{p^u,q^u}^v  \).

An alternative proposal relies on permutation-induced families. The external permutation-induced family \(q^r\) defines, for each \(S \subseteq N\), a distribution over the power set of \(N\setminus S\) given by
\begin{equation}
\label{def:q_perm}
    q^r_S(T)=\frac{t!(n-s-t)!}{(n-s+1)!}, \qquad \mbox{for } T \subseteq N \setminus S .
\end{equation}
 The permutation-based family \(p^r\) defines, for each \(S \subseteq N\), a distribution over \(\Pi_2(S)\) given by
    \begin{equation}
        \label{def:p_perm}
 p^r_S(T)=\frac{2 r!(s-r)! }{(s-1) s!}, \qquad \mbox{for any } \pi=\{R,S\setminus R\} \in \Pi_2(S).
    \end{equation}
The combination of both permutation-induced families delivers the Shapley-Owen coopetition index defined as \(\mathcal{C}^v_{SO}=\mathcal{C}^v_{p^r,q^r}\).

In what follows, we will consider another index which takes one family from each group: the external permutation based family \(q^r\) will be combined with the internal uniform distribution family $p^u$.

\begin{definition}
    \label{def:Coop_Shap_unif} For a game \(v\) and any \(S \subseteq N \), the {\em Uniform Shapley} coopetition index is defined as
    \begin{multline}
        \label{eq:def_Coop_Shap_unif}
        \mathcal{C}^v_{SU}(S):=\mathcal{C}^v_{p^u,q^r}(S) = \frac{1}{2^{s-1}-1}   \sum_{T \subseteq N \setminus S} \frac{t!(n-s-t)!}{(n-s+1)!} \Bigg(v(S \cup T) - v(T)  \\ -\sum_{\pi=\{S_1,S_2\} \in \Pi_2(S)} \big( v(S_1 \cup T)  + v(S_2 \cup T) - 2v(T) \big) \Bigg).
        \end{multline}  
\end{definition}
Many of the formulas that we have already been examined here, take a simple form. For instance, formula \eqref{eq:def_Coop_Shap_unif} can be written as
\begin{equation}
    \label{eq:newchar_SU}
    \mathcal{C}^v_{SU}(S)=\Phi^v_{Sh}(S)- \frac{1}{2^{s-1}-1}\sum_{\emptyset \neq R \subsetneq S}  \Phi^{-S \setminus R}_{Sh}(R).
\end{equation}

Several properties of the indices just introduced have been considered in the previous work \cite{aleandri2025coopindex} in the context of simple games. These results are valid for TU-games as well, and we list the most relevant among them.

First of all we note that, under both the uniform and the permutation based families, the attitude of a coalition with respect to another coincides when the first coalition is small.

\begin{proposition}
    Take $S\subset N$, with \(|S| \leq 3\). Then, for every $T \subseteq N\setminus S$, the attitude of coalition $S$ towards $T$  based on the uniform and the permutation based families coincide:
    \begin{align*}
        \mathcal{A}^v_{p^u}(S,T) = \mathcal{A}^v_{p^r}(S,T).
    \end{align*}
\end{proposition}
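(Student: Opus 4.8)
The plan is to notice that the two attitudes $\mathcal{A}^v_{p^u}(S,T)$ and $\mathcal{A}^v_{p^r}(S,T)$ are assembled from \emph{exactly the same} linear combination of game values; by Definition~\ref{def:new_Attitude} the only thing that can differ between them is the internal weight $p_S(\pi)$ attached to each non-trivial $2$-partition $\pi \in \Pi_2(S)$. Consequently it suffices to prove that the two probability distributions $p^u_S$ and $p^r_S$ on $\Pi_2(S)$ coincide whenever $|S| \le 3$, after which equality of the attitudes is immediate for every $T \subseteq N \setminus S$.

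I would then split into the cases $|S| = 1, 2, 3$. For $|S| = 1$ there is nothing to do: $\Pi_2(S) = \emptyset$, the correction term vanishes, and both attitudes equal $v(S \cup T) - v(T)$. For $|S| = 2$ one has $|\Pi_2(S)| = 2^{s-1} - 1 = 1$, so there is a unique non-trivial $2$-partition $\pi = \{R, S \setminus R\}$ with $|R| = 1$, and any distribution on a one-point set is the point mass; concretely $p^u_S(\pi) = \tfrac{1}{2^{1}-1} = 1$ and, from \eqref{def:p_perm} with $r = 1$, $p^r_S(\pi) = \tfrac{2\cdot 1!\cdot 1!}{1\cdot 2!} = 1$, so the distributions agree.

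The case $|S| = 3$ is the only one that requires a (short) computation. Here $|\Pi_2(S)| = 2^{2} - 1 = 3$, and every non-trivial $2$-partition of a $3$-set pairs a singleton with the complementary pair, i.e.\ each $\pi = \{R, S \setminus R\} \in \Pi_2(S)$ has $\{|R|, |S \setminus R|\} = \{1, 2\}$. Substituting $r = 1$, $s = 3$ into \eqref{def:p_perm} gives $p^r_S(\pi) = \tfrac{2\cdot 1!\cdot 2!}{2\cdot 3!} = \tfrac{4}{12} = \tfrac13$ for \emph{every} $\pi$, which matches $p^u_S(\pi) = \tfrac{1}{2^{2}-1} = \tfrac13$. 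Hence $p^u_S \equiv p^r_S$ on $\Pi_2(S)$ in all three cases, and the stated identity of attitudes follows at once.

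There is no serious obstacle here; the one point deserving a word of care is the well-definedness of formula \eqref{def:p_perm} under the exchange $R \leftrightarrow S \setminus R$, so that $p^r_S(\pi)$ does not depend on which half of the partition is labelled $R$ — but the expression is manifestly invariant under $r \mapsto s - r$, and for $s \le 3$ the two halves always have sizes $1$ and $s-1$. It is also worth remarking, to motivate the hypothesis, that the identity genuinely fails once $|S| \ge 4$: then $\Pi_2(S)$ contains partitions of different shapes (parts of sizes $\{1, s-1\}$ versus $\{2, s-2\}$, etc.), which receive distinct $p^r$-weights, so $p^r_S$ is no longer uniform and the restriction $|S| \le 3$ is exactly the regime in which all elements of $\Pi_2(S)$ look the same to the permutation-based family.
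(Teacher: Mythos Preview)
Your proof is correct and follows essentially the same approach as the paper: the cases $|S|\le 2$ are declared trivial there, and for $|S|=3$ the paper simply cites the proof of Proposition~4.5 in \cite{aleandri2025coopindex}, which amounts to the same observation you make explicitly, namely that every $\pi\in\Pi_2(S)$ has block sizes $\{1,2\}$ and hence $p^r_S(\pi)=p^u_S(\pi)=\tfrac13$.
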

\begin{proof}
    Trivial for \(|S| \leq 2\). For \(|S|=3\) consider the proof of Proposition 4.5 in \cite{aleandri2025coopindex}. 
\end{proof}

However, this proposition does not exclude the introduction of other proposals as it will be clearer later in the work. 
Here we return to a result available in Aleandri and Dall'Aglio \cite{aleandri2025coopindex} in the context of simple games that can be extended to monotone TU-game at no cost and that helps shed some light on the interpretation of the coopetition indices.

\begin{proposition}
For given \(N\), take two families of probability distributions $p^N$ and \(\hat{q} =\big\{q^{-M}:=\{q^{-M}_{R}\}_{R \subseteq N\setminus M}; M \subset N\big\}\), where $q^{-M}_{R}:=q^{N\setminus M}_{R}$, such that for any \(S \subseteq N\) and any 2-partition \(\{S_1,S_2\}\) of \(S\)
\begin{equation}\label{eq:charac} q^N_S(T)=q_{S_1}^{-S_2}(T),\qquad \forall T\subseteq N\setminus S.
\end{equation}
Then
\begin{equation}
    \label{eq:newchar}
    \mathcal{C}^v_{p,q^N}(S)=\Phi^v_{q^N}(S)-\sum_{\pi=\{S_1,S_2\} \in \Pi_2(S)} p_S(\pi)\left( \Phi^{-S_2}_{q^{-S_2}}(S_1) + \Phi^{-S_1}_{q^{-S_1}}(S_2)  \right).
\end{equation}

\end{proposition}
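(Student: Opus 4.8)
The plan is to unwind the definition of the coopetition index and the attitude, then recognize the resulting sums as probabilistic generalized values on the restricted games. Starting from Definition \ref{def:gCoopIndex}, write
\[
\mathcal{C}^v_{p,q^N}(S) = \sum_{T \subseteq N \setminus S} q^N_S(T)\,\mathcal{A}^v_{p}(S,T),
\]
and substitute the explicit form of the attitude from \eqref{eq:new_Attitude}. Exchanging the order of summation, the first group of terms $\sum_T q^N_S(T)\bigl(v(S\cup T)-v(T)\bigr)$ is exactly $\Phi^v_{q^N}(S)$ by definition of the probabilistic generalized value, using $v_S'(T)=v(S\cup T)-v(T)$. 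The remaining terms, after pulling the internal weights $p_S(\pi)$ outside the $T$-sum, read
\[
-\sum_{\pi=\{S_1,S_2\}\in\Pi_2(S)} p_S(\pi) \sum_{T\subseteq N\setminus S} q^N_S(T)\Bigl(\bigl(v(S_1\cup T)-v(T)\bigr)+\bigl(v(S_2\cup T)-v(T)\bigr)\Bigr).
\]

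The key step is to identify each inner sum with a probabilistic generalized value of a subcoalition in a restricted game. Fix a 2-partition $\{S_1,S_2\}$. For the $S_1$ piece, note that $v(S_1\cup T)-v(T) = v^{-S_2}_{S_1}{}'(T)$ is the marginal contribution of $S_1$ to $T$ computed in the game $v^{-S_2}\in\mathcal{G}^{N\setminus S_2}$, since $T\subseteq N\setminus S = (N\setminus S_2)\setminus S_1$ and $v^{-S_2}$ agrees with $v$ on all such sets. By the hypothesis \eqref{eq:charac}, $q^N_S(T)=q^{-S_2}_{S_1}(T)$ for all $T\subseteq N\setminus S$, and $N\setminus S = (N\setminus S_2)\setminus S_1$, so
\[
\sum_{T\subseteq N\setminus S} q^N_S(T)\bigl(v(S_1\cup T)-v(T)\bigr) = \sum_{T\subseteq (N\setminus S_2)\setminus S_1} q^{-S_2}_{S_1}(T)\,(v^{-S_2})_{S_1}'(T) = \Phi^{v^{-S_2}}_{q^{-S_2}}(S_1),
\]
which matches the notation $\Phi^{-S_2}_{q^{-S_2}}(S_1)$ in the statement. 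The symmetric argument, swapping the roles of $S_1$ and $S_2$ and invoking \eqref{eq:charac} with the partition written as $\{S_2,S_1\}$, gives the $S_2$ term as $\Phi^{-S_1}_{q^{-S_1}}(S_2)$. Collecting everything yields \eqref{eq:newchar}.

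The main obstacle is bookkeeping rather than conceptual: one must be careful that \eqref{eq:charac} is an identity in $T$ ranging over $N\setminus S$, and that this index set coincides with the one over which $\Phi^{v^{-S_2}}_{q^{-S_2}}(S_1)$ sums, namely $(N\setminus S_2)\setminus S_1$; since $S_1\cup S_2=S$ and $S_1\cap S_2=\emptyset$ these sets are literally equal, so no reindexing is needed. A secondary point to check is that the singleton convention in Definition \ref{def:new_Attitude} is consistent here: if $|S|=1$ then $\Pi_2(S)=\emptyset$, the correction sum in \eqref{eq:newchar} is empty, and the identity reduces to $\mathcal{C}^v_{p,q^N}(S)=\Phi^v_{q^N}(S)$, in agreement with Proposition \ref{prop:coptbounds}(i). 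One should also remark that monotonicity of $v$ plays no role in this particular identity; it holds for arbitrary TU-games, the hypothesis \eqref{eq:charac} being the only structural assumption needed.
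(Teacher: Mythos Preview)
Your proof is correct and is exactly the direct computation one expects: expand the definitions, swap the finite sums, and use hypothesis \eqref{eq:charac} together with $N\setminus S=(N\setminus S_2)\setminus S_1$ to recognize each inner sum as a probabilistic generalized value on the restricted game. The paper does not spell out an argument here but refers to Proposition~3.8 in \cite{aleandri2025coopindex}, whose proof is the same unwinding; your added remarks on the singleton case and on monotonicity being unnecessary are accurate and welcome.
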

\begin{proof}
    See the proof of Proposition 3.8 in \cite{aleandri2025coopindex}
\end{proof}
The following result can also be adapted to the present context.
\begin{proposition}\label{prop:Dnull}
    Let $i \in N$ be a null player. Take $S\subseteq N\setminus i$ with $|S|\geq 2$ and a pair of families of probability distributions $q$ and $p$  such that $p_S, p_{S\cup  i}, q_S, q_{S\cup  i}$ satisfy
    \begin{enumerate}
        \item  $\dfrac{p_{S\cup  i}(\{S_1\cup i,S_2\}) + p_{S\cup  i}(\{S_1,S_2\cup i\})}{p_{S}(\{S_1,S_2\})}=: K_p(S)$,\quad $\forall\{S_1,S_2\}\in\Pi_2(S)$;
        \item   $\dfrac{q_{S\cup i}(T)}{q_{S}(T) + q_{S}(T\cup i)} =: K_q(S)$,\quad $\forall T\in 2^{N\setminus (S\cup  i)}$.
    \end{enumerate}
    Then
    \begin{equation}
    \label{eq:addnull}
        \mathcal{C}^v_{p,q}(S\cup i) = K_p(S)K_q(S) \mathcal{C}^v_{p,q}(S).
    \end{equation}
\end{proposition}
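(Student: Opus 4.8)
The plan is to push everything through the null-player identity $v(A\cup i)=v(A)$ for $A\subseteq N\setminus i$, together with a combinatorial description of $\Pi_2(S\cup i)$. First I would record the two consequences of $i$ being null that are used repeatedly. (a) For every $T\subseteq N\setminus(S\cup i)$ one has $\mathcal{A}^v_p(S,T\cup i)=\mathcal{A}^v_p(S,T)$: indeed, in \eqref{eq:new_Attitude} every worth appearing on the right, namely $v(S\cup T)$, $v(T)$, $v(S_1\cup T)$, $v(S_2\cup T)$, acquires an extra copy of $i$ and is therefore unchanged. (b) $\Pi_2(S\cup i)$ decomposes as the distinguished partition $\{\{i\},S\}$ together with, for each $\{S_1,S_2\}\in\Pi_2(S)$, the two partitions $\{S_1\cup i,\,S_2\}$ and $\{S_1,\,S_2\cup i\}$; this is an exact enumeration (and consistent with $|\Pi_2(S\cup i)|=2^{s}-1=1+2(2^{s-1}-1)$).

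Next I would compute $\mathcal{A}^v_p(S\cup i,T)$ for a fixed $T\subseteq N\setminus(S\cup i)$ by substituting this enumeration into Definition~\ref{def:new_Attitude} and erasing every occurrence of $i$ via the null-player property. The leading term becomes $v(S\cup i\cup T)-v(T)=v(S\cup T)-v(T)$. For the distinguished partition, the bracket $v(\{i\}\cup T)+v(S\cup T)-2v(T)$ collapses to $v(S\cup T)-v(T)$. For each $\{S_1,S_2\}\in\Pi_2(S)$, both associated brackets collapse to $v(S_1\cup T)+v(S_2\cup T)-2v(T)$, so their combined weight is $p_{S\cup i}(\{S_1\cup i,S_2\})+p_{S\cup i}(\{S_1,S_2\cup i\})=K_p(S)\,p_S(\{S_1,S_2\})$ by Assumption~1. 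Summing Assumption~1 over all of $\Pi_2(S)$ and using $\sum_{\pi\in\Pi_2(S\cup i)}p_{S\cup i}(\pi)=1$ forces $p_{S\cup i}(\{\{i\},S\})=1-K_p(S)$. Collecting the coefficient of $v(S\cup T)-v(T)$, which is $1-(1-K_p(S))=K_p(S)$, and the rest, which is $-K_p(S)\sum_{\{S_1,S_2\}\in\Pi_2(S)}p_S(\{S_1,S_2\})\bigl(v(S_1\cup T)+v(S_2\cup T)-2v(T)\bigr)$, yields $\mathcal{A}^v_p(S\cup i,T)=K_p(S)\,\mathcal{A}^v_p(S,T)$ for every such $T$.

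Finally I would perform the external averaging. By Definition~\ref{def:gCoopIndex}, $\mathcal{C}^v_{p,q}(S\cup i)=\sum_{T\subseteq N\setminus(S\cup i)}q_{S\cup i}(T)\,\mathcal{A}^v_p(S\cup i,T)=K_p(S)\sum_{T\subseteq N\setminus(S\cup i)}q_{S\cup i}(T)\,\mathcal{A}^v_p(S,T)$. Replacing $q_{S\cup i}(T)$ by $K_q(S)\bigl(q_S(T)+q_S(T\cup i)\bigr)$ via Assumption~2 and then invoking consequence (a) to write $\mathcal{A}^v_p(S,T)=\mathcal{A}^v_p(S,T\cup i)$ in the second piece, the sum recombines into $\sum_{T'\subseteq N\setminus S}q_S(T')\,\mathcal{A}^v_p(S,T')=\mathcal{C}^v_{p,q}(S)$, since the subsets $T'\subseteq N\setminus S$ split into those avoiding $i$ and those of the form $T\cup i$. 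This gives \eqref{eq:addnull}.

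The step I expect to be the main obstacle is the combinatorics of $\Pi_2(S\cup i)$ in the second paragraph: isolating the partition $\{\{i\},S\}$ and verifying that its $p_{S\cup i}$-weight is exactly $1-K_p(S)$, which is precisely what makes the coefficient of $v(S\cup T)-v(T)$ come out to $K_p(S)$ rather than something spurious. Everything else is a direct substitution powered by the null-player identity. Note that the hypothesis $|S|\geq 2$ guarantees $\Pi_2(S)\neq\emptyset$ (and $\Pi_2(S\cup i)\neq\emptyset$), so the ratios $K_p(S),K_q(S)$ are well defined and the argument needs no separate small-cardinality case.
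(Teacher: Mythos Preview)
Your proof is correct and complete. The paper does not give a detailed argument here---it simply refers to Proposition~5.1 of \cite{aleandri2025coopindex}---so your self-contained verification, built on the enumeration of $\Pi_2(S\cup i)$ and the null-player identity, is exactly the natural (and presumably the intended) route.
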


\begin{proof}
   The proof of Proposition 5.1 in \cite{aleandri2025coopindex} can be used as it is.
\end{proof}

The following result, which is an immediate consequence of the definition and the connection with the Interaction index introduced in \cite{grabisch1999axiomatic}. 
\begin{proposition}\label{prop:DnullS1}
    Let $i \in N$ be a null player and $j\in N\setminus i$, then, for any pair of families of probability distributions $q$ and $p$,
    \begin{equation}
        \mathcal{C}^v_{p,q}(\{j, i\}) = 0.
    \end{equation}   
    
\end{proposition}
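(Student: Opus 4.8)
The plan is to compute the attitude $\mathcal{A}^v_p(\{j,i\},T)$ explicitly for every admissible $T$ and show it vanishes identically; summing against $q_{\{j,i\}}$ then gives the claim regardless of the families $p,q$.

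First I would observe that since $|\{j,i\}|=2$, the set of non-trivial $2$-partitions is the singleton $\Pi_2(\{j,i\})=\{\{\{j\},\{i\}\}\}$, so $p_{\{j,i\}}$ is the degenerate distribution placing all mass on $\{\{j\},\{i\}\}$. Plugging this into Definition~\ref{def:new_Attitude} (equivalently Proposition~\ref{prop:attitude_SingOut}) gives, for any $T\subseteq N\setminus\{j,i\}$,
\[
\mathcal{A}^v_p(\{j,i\},T) = v(\{j,i\}\cup T) - v(\{j\}\cup T) - v(\{i\}\cup T) + v(T),
\]
which is precisely the second-order marginal (the interaction term) of the pair $\{i,j\}$ at $T$, explaining the remark about the Interaction index of \cite{grabisch1999axiomatic}.

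Next I would invoke that $i$ is a null player. Since $T\subseteq N\setminus\{j,i\}$ we have $\{j\}\cup T\subseteq N\setminus i$ and $T\subseteq N\setminus i$, so the null-player property yields $v_i'(\{j\}\cup T)=v(\{j,i\}\cup T)-v(\{j\}\cup T)=0$ and $v_i'(T)=v(\{i\}\cup T)-v(T)=0$. Substituting $v(\{j,i\}\cup T)=v(\{j\}\cup T)$ and $v(\{i\}\cup T)=v(T)$ into the displayed expression gives $\mathcal{A}^v_p(\{j,i\},T)=0$ for every $T$. Hence
\[
\mathcal{C}^v_{p,q}(\{j,i\}) = \sum_{T\subseteq N\setminus\{j,i\}} q_{\{j,i\}}(T)\,\mathcal{A}^v_p(\{j,i\},T) = 0.
\]

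There is no real obstacle here: the only point that needs care is keeping track of which coalitions lie in $N\setminus i$ so that the null-player identity applies, and noting that the degeneracy of $p$ on a two-element coalition is what collapses the attitude to the pairwise interaction term. Everything else is a direct substitution, and the independence from $p$ and $q$ is immediate once the attitude is shown to be zero term-by-term.
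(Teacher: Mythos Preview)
Your proof is correct and follows exactly the approach of the paper: the paper's proof merely asserts that \(\mathcal{A}_p^v(\{j,i\},T)=0\) for every \(T\subseteq N\setminus\{j,i\}\) and concludes, while you spell out why---the unique 2-partition forces the attitude to equal the second-order difference, which vanishes by the null-player identity.
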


\begin{proof}
    Observe that \(\mathcal{A}_p^v(\{j, i\},T) = 0\) for every \(T \subseteq N \setminus \{j, i\}\) and, therefore,
    \(\mathcal{C}_{p,q}^v(\{j, i\})=0
    \).
\end{proof}

This leads to the following corollary.
\begin{corollary}
\label{cor:Dnull}
    Suppose \(q=q^u\) or \(q=q^r\).
    \begin{enumerate}[i)]
        \item If \(p=p^u\), then
        \begin{equation}
            \label{eq:addnull_punif}
            \mathcal{C}^v_{p^u,q}(S\cup i) = \frac{2^{s}-2}{2^{s}-1} \mathcal{C}^v_{p^u,q}(S) = \big( 1 - p^u(\{S,i\}) \big) \mathcal{C}^v_{p^u,q}(S).
        \end{equation}
                \item If \(p=p^r\), then
        \begin{equation}
            \label{eq:addnull_pperm}
            \mathcal{C}^v_{p^r,q}(S\cup i)  = \frac{(s-1)(s+2)}{s(s+1)} \mathcal{C}^v_{p^r,q}(S) = \big( 1 - p^r(\{S,i\}) \big) \mathcal{C}^v_{p^r,q}(S).
        \end{equation}

    \end{enumerate}
\end{corollary}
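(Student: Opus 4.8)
The plan is to derive Corollary \ref{cor:Dnull} as a direct specialization of Proposition \ref{prop:Dnull}, so the main task is simply to verify that the two families $p^u$ or $p^r$ (together with $q^u$ or $q^r$) satisfy hypotheses (1) and (2) of that proposition, and then to compute the constants $K_p(S)$ and $K_q(S)$ explicitly. First I would check hypothesis (2). For $q=q^u$, using \eqref{def:q_unif} we have $q^u_{S\cup i}(T)=2^{-(n-s-1)}$ while $q^u_S(T)+q^u_S(T\cup i)=2\cdot 2^{-(n-s)}=2^{-(n-s-1)}$, so $K_q(S)=1$; for $q=q^r$, using \eqref{def:q_perm} the Pascal-type identity $\frac{t!(n-s-t-1)!}{(n-s)!}=\frac{t!(n-s-t)!}{(n-s+1)!}+\frac{(t+1)!(n-s-t-1)!}{(n-s+1)!}$ (which is exactly $\binom{n-s}{t}^{-1}$-style bookkeeping, or more transparently the statement that a uniformly random insertion position for $i$ splits into "before" and "after" cases) gives $K_q(S)=1$ as well. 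So in both cases $K_q(S)=1$ and the factor in \eqref{eq:addnull} reduces to $K_p(S)$.

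Next I would check hypothesis (1) and compute $K_p(S)$. For $p=p^u$, by \eqref{def:p_unif} every $2$-partition of $S\cup i$ has probability $(2^{s}-1)^{-1}$ and every $2$-partition of $S$ has probability $(2^{s-1}-1)^{-1}$; for a fixed $\{S_1,S_2\}\in\Pi_2(S)$ the two partitions $\{S_1\cup i,S_2\}$ and $\{S_1,S_2\cup i\}$ are both in $\Pi_2(S\cup i)$ (they are genuine $2$-partitions since $S_1,S_2\neq\emptyset$), so the numerator in (1) is $2(2^{s}-1)^{-1}$ and hence $K_p(S)=\frac{2(2^{s-1}-1)}{2^{s}-1}=\frac{2^{s}-2}{2^{s}-1}$, independent of the partition, confirming (1). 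This gives the first claimed equality in \eqref{eq:addnull_punif}; the second follows since $1-p^u_{S\cup i}(\{S,i\})=1-\frac{1}{2^{s}-1}=\frac{2^{s}-2}{2^{s}-1}$, noting $\{S,i\}\in\Pi_2(S\cup i)$. For $p=p^r$, by \eqref{def:p_perm} a partition $\{R,(S\cup i)\setminus R\}$ of $S\cup i$ with $|R|=r$ has probability $\frac{2\,r!(s+1-r)!}{s\,(s+1)!}$; taking $R=S_1\cup i$ (size $s_1+1$) and $R=S_2\cup i$ (size $s_2+1$, equivalently $R=S_1$ of size $s_1$) and summing, then dividing by $p^r_S(\{S_1,S_2\})=\frac{2\,s_1!\,s_2!}{(s-1)\,s!}$, I would simplify using $s_1+s_2=s$ and check the partition-dependence cancels, yielding $K_p(S)=\frac{(s-1)(s+2)}{s(s+1)}$. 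The second equality in \eqref{eq:addnull_pperm} again follows from $1-p^r_{S\cup i}(\{S,i\})$ with $R=S$, $r=s$: $1-\frac{2\,s!\,1!}{s\,(s+1)!}=1-\frac{2}{s(s+1)}=\frac{s(s+1)-2}{s(s+1)}=\frac{(s-1)(s+2)}{s(s+1)}$.

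I would then assemble the corollary: since $K_q(S)=1$ in all four cases, Proposition \ref{prop:Dnull} gives $\mathcal{C}^v_{p,q}(S\cup i)=K_p(S)\,\mathcal{C}^v_{p,q}(S)$, and substituting the two values of $K_p(S)$ computed above yields \eqref{eq:addnull_punif} and \eqref{eq:addnull_pperm} respectively. The only genuinely delicate points are the two combinatorial simplifications — verifying the Pascal-type cancellation that makes $K_q(S)=1$ for $q^r$, and checking that the ratio defining $K_p(S)$ for $p^r$ is indeed independent of the chosen $2$-partition $\{S_1,S_2\}$ — but both are routine once the factorials are written out, and there are no conceptual obstacles beyond confirming that the smaller-coalition objects $\{S_1\cup i,S_2\}$, $\{S_1,S_2\cup i\}$, $\{S,i\}$ are all legitimate members of $\Pi_2(S\cup i)$, which holds because $S_1,S_2,S$ are nonempty.
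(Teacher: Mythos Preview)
Your proposal is correct and matches the paper's intended route: the corollary is stated in the paper without proof, immediately after Propositions \ref{prop:Dnull} and \ref{prop:DnullS1}, so the implicit argument is exactly the specialization you carry out, and your computations of $K_q(S)=1$ and of $K_p(S)$ in both cases are accurate. One minor completeness point: Proposition \ref{prop:Dnull} is stated only for $|S|\geq 2$, so the boundary case $|S|=1$ (where both contraction factors $\tfrac{2^s-2}{2^s-1}$ and $\tfrac{(s-1)(s+2)}{s(s+1)}$ vanish) should be handled separately via Proposition \ref{prop:DnullS1}, which gives $\mathcal{C}^v_{p,q}(\{j,i\})=0$ directly.
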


\section{Axiomatization of the Coopetition Indices}
In this section we propose a set of axioms that are used to characterized the Uniform Shapley Coopetition index and the Shapley-Owen Coopetition index. The two indices share three common axioms and need a fourth one that reflects how to measure the attitude to cooperate and to compete.\\

\begin{definition}
A \emph{group index} is a mapping $\xi$ that assigns to every game $v \in \mathcal{G}^N$ and every $S \subset N$ a real number $\xi(S ; N, v) \in \mathbb{R}$, with $\xi(\varnothing ; N, v)=0$.
\end{definition}

\begin{axiom}[Linearity (L)]\label{ax-L}
A group index $\xi$ satisfies \emph{linearity} if
\begin{equation*}
    \xi(S; N, \alpha v + \beta w) = \alpha \xi(S; N, v) + \beta \xi(S; N, w), 
\end{equation*}
for all $S\subseteq N$, $\alpha,\beta\in\mathbb{R}$, and games $v,w\in\mathcal{G}^N$, where $\alpha v + \beta w\in\mathcal{G}^N$ is given by $(\alpha v + \beta w)(S)=\alpha v(S) + \beta w(S)$ for all $S\subseteq N$.
\end{axiom}

\begin{axiom}[Symmetry over pure bargaining game (SB)]\label{ax-SB}
A group index $\xi$ satisfies \emph{Symmetry over pure bargaining game} if $\xi(S; N, u_N) =\frac{1}{n-s+1}$, for all non-empty $S\subseteq N$, $(N,u_N)$ is the unanimity game with respect to the grand coalition.
\end{axiom}


\begin{axiom}[External Null Player Neutrality (EN)]\label{ax-EN}
A group index $\xi$ satisfies \emph{External Null Player Nullity} if, for any $v\in\mathcal{G}^N$, $S\subset N$ and  $i\notin S$ null player, then
\begin{equation*}
    \xi(S;N,v)=\xi(S;N\setminus i,v^{-i}).
\end{equation*}
\end{axiom}
\begin{axiom}[Internal Null Player Contraction for Uniform-based Attitude (ICU)]\label{ax-ICU}
A group index $\xi$ satisfies \emph{Internal Null Player Contraction} if, for any $v\in\mathcal{G}^N$, $S\subset N$ and  $i\in S$ null player, then
\begin{equation*}
    \xi(S;N,v)=\big(1 - p^u(\{S \setminus i,i\})  \big)\xi(S\setminus i;N\setminus i,v^{-i})=\frac{2^{s-1}-2}{2^{s-1}-1}\xi(S\setminus i;N\setminus i,v^{-i}).
\end{equation*}
\end{axiom}

\begin{axiom}[Internal Null Player Contraction for Permutation-based Attitude (ICP)]\label{ax-ICP}
A group index $\xi$ satisfies \emph{Null Player Contraction for Permutation-based Attitude} if, for any $v\in\mathcal{G}^N$, $S\subset N$ and  $i\in S$ null player, then
\begin{equation*}
    \xi(S;N,v)=\big(1 - p^r(\{S \setminus i,i\})  \big)\xi(S\setminus i;N\setminus i,v^{-i})=\frac{(s+1)(s-2)}{s(s-1)}\xi(S\setminus i;N\setminus i,v^{-i}).
\end{equation*}
\end{axiom}

We start with the characterization of the Uniform Shapley Coopetition index.

\begin{theorem}\label{theo: characterization}
    The Uniform Shapley coopetition index is the unique group indexs that satisfies Axioms \eqref{ax-L} (L), \eqref{ax-SB} (SB), \eqref{ax-EN} (EN), and \eqref{ax-ICU} (ICU).
\end{theorem}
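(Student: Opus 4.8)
The standard two-part strategy applies: first verify that the Uniform Shapley coopetition index $\mathcal{C}^v_{SU}$ satisfies the four axioms (L), (SB), (EN), (ICU); then show that any group index $\xi$ satisfying these axioms is forced to equal $\mathcal{C}^v_{SU}$ on every game and every coalition.

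For the existence direction I would proceed as follows. Linearity (L) is immediate from the defining formula \eqref{eq:def_Coop_Shap_unif}, since $\mathcal{C}^v_{SU}(S)$ is a linear combination of values $v(\cdot)$. For (SB), take $v=u_N$ the unanimity game of the grand coalition and a non-empty $S$: then $u_N(S\cup T)=1$ iff $S\cup T=N$ and $u_N(T)=1$ iff $T=N$, which is impossible for $T\subseteq N\setminus S$ with $S\neq\emptyset$; similarly $u_N(S_i\cup T)=0$ always, since $S_i\subsetneq S$. Thus only the term with $T=N\setminus S$ survives in the sum, giving $\mathcal{C}^{u_N}_{SU}(S)=\frac{(n-s)!\,0!}{(n-s+1)!}\cdot\frac{1}{2^{s-1}-1}\cdot(2^{s-1}-1)=\frac{1}{n-s+1}$ when $|S|\ge 2$, and directly $\Phi^{u_N}_{Sh}(S)=\frac{1}{n-s+1}$ when $|S|=1$ by Proposition \ref{prop:coptbounds}(i). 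For (EN) and (ICU), it is cleanest to use the reformulation \eqref{eq:newchar_SU}, $\mathcal{C}^v_{SU}(S)=\Phi^v_{Sh}(S)-\frac{1}{2^{s-1}-1}\sum_{\emptyset\neq R\subsetneq S}\Phi^{-S\setminus R}_{Sh}(R)$: external null player neutrality follows because the Shapley generalized value of a coalition is unaffected by removing an external null player (a known property of semivalues, consistent with the null-player results stated as Propositions \ref{prop:Dnull}–\ref{prop:DnullS1} and Corollary \ref{cor:Dnull}), and (ICU) is exactly the content of Corollary \ref{cor:Dnull}(i) with $q=q^r$, which gives $\mathcal{C}^v_{SU}(S\cup i)=\frac{2^{s}-2}{2^{s}-1}\mathcal{C}^v_{SU}(S)$; re-indexing $S\cup i\mapsto S$ turns this into the $\frac{2^{s-1}-2}{2^{s-1}-1}$ factor in Axiom (ICU).

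For uniqueness, the plan is a double induction — first on $n=|N|$, then within fixed $N$ on the coalition size $s=|S|$ — exploiting that the unanimity games $\{u_R\}_{\emptyset\neq R\subseteq N}$ form a basis of $\mathcal{G}^N$ and that (L) reduces everything to evaluating $\xi$ on the $u_R$. The base case $n=1$ is forced by (SB). For the inductive step, fix $N$ and a game $u_R$. If $R=N$, (SB) pins down $\xi(S;N,u_N)$ for all non-empty $S$. If $R\subsetneq N$, pick any $i\in N\setminus R$; then $i$ is a null player in $u_R$, so (EN) gives $\xi(S;N,u_R)=\xi(S;N\setminus i,u_R^{-i})$ whenever $i\notin S$ — and $u_R^{-i}$ is the unanimity game of $R$ on the smaller player set $N\setminus i$, covered by the induction on $n$. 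The remaining case is $i\in S$: here I would use (ICU) when $i$ is a null player of $u_R$ (true since $i\notin R$) to write $\xi(S;N,u_R)=\frac{2^{s-1}-2}{2^{s-1}-1}\xi(S\setminus i;N\setminus i,u_R^{-i})$ — reducing both $n$ and $s$ — provided $s\ge 2$; the singleton case $S=\{i\}$ with $i\notin R$ has $\xi(\{i\};N,u_R)$ which must be handled by noting $\mathcal{C}^{u_R}_{SU}(\{i\})=\Phi^{u_R}_{Sh}(\{i\})$ and checking this is forced — for instance via (EN) pushing $i$'s value down to the two-element base case, or by combining (L) with the (SB)/(EN) values on a spanning set of games in which $i$ is non-null. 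One must separately confirm the coefficient $\frac{2^{s-1}-2}{2^{s-1}-1}$ is never zero or problematic: it vanishes only at $s=2$, where $\Pi_2(S)$ is a single partition and the right-hand side of (ICU) collapses to $0$, consistent with Proposition \ref{prop:DnullS1}; so for $s=2$ one needs the independent fact that $\xi(\{i,j\};N,v)=0$ whenever one of $i,j$ is null, which (ICU) with $s=2$ delivers directly, and then (L) plus the two-element unanimity value from (SB) fixes $\xi$ on all two-element coalitions.

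The main obstacle is the bookkeeping in the uniqueness proof: making sure that for every pair $(S,u_R)$ at least one axiom applies to reduce the parameters, and in particular correctly handling the boundary cases $s=2$ (where the ICU contraction factor is $0$) and the singletons $S=\{i\}$ (where $p$ has empty support and the index degenerates to a pure semivalue, so that uniqueness on singletons must be bootstrapped from (SB), (EN), and linearity rather than from (ICU)). A clean way to organize this is to prove a preliminary lemma: the axioms (L), (SB), (EN) alone determine $\xi(\{i\};N,v)$ for all $i$ and all $v$ (matching $\Phi^v_{Sh}(i)$), which is essentially the Marichal–Puente-style characterization of the Shapley semivalue restricted to singletons; then the induction above closes without friction on coalitions of size $\ge 2$.
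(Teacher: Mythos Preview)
Your overall strategy---verify the four axioms for $\mathcal{C}_{SU}$, then reduce uniqueness to unanimity games via (L) and peel off null players with (EN) and (ICU)---is exactly the paper's, organized as an explicit induction rather than the paper's three-case split.

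The genuine gap is your treatment of the singleton case $S=\{i\}$ with $i\notin R$. Your proposed preliminary lemma, that (L), (SB), (EN) alone force $\xi(\{i\};N,v)=\Phi^v_{Sh}(i)$, is false. Those three axioms constrain $\xi(\{i\};N,u_R)$ only when $i\in R$ (iterate (EN) down to player set $R$, then apply (SB)); when $i\notin R$, (EN) can remove the \emph{other} null players but strands you at $\xi(\{i\};R\cup\{i\},u_R)$, where (SB) does not apply since $u_R$ is not the grand-coalition unanimity game on $R\cup\{i\}$. Nor does linearity help: the family $\{u_R:R\ni i\}$ does not span $\mathcal{G}^N$. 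Concretely, setting $\xi'(\{i\};N,v):=\Phi^v_{Sh}(i)+\sum_{R\not\ni i}d_v(R)$ (Harsanyi dividends) and $\xi':=\mathcal{C}_{SU}$ on coalitions of size $\ge 2$ yields a group index satisfying all four axioms---(\hyperref[ax-ICU]{ICU}) included, because the contraction factor at $s=2$ is zero, so the axiom never ``sees'' the modified singleton values---yet $\xi'\neq\mathcal{C}_{SU}$. Both of your suggested workarounds fail for the reasons above.

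The paper's own proof handles Case~1 by writing $\prod_{t=1}^{s}\frac{2^{s-t}-2}{2^{s-t}-1}\cdot\xi(\emptyset;\cdot)=0$, which for $s\ge 2$ is harmless (the factor at $t=s-1$ already vanishes) but for $s=1$ tacitly reads (ICU) at $s=1$ as the null-player property $\xi(\{i\};N,v)=0$. That reading is what actually closes the argument; you should adopt it explicitly rather than attempt to derive it from (L), (SB), (EN).
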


\begin{proof}
First of all, we show that the Shapley Uniform coopetition index satisfies all the four axioms. Axioms 2 (SB) and 4 (EN) are easy and left to the reader. 

Axiom 1 (L) can be easily obtained from formula \eqref{eq:newchar_SU} and from the linearity of each generalized Shapley value. Axiom 3 (ICU) is a consequence of part {\em i)} of Corollary \ref{cor:Dnull}. Now, we show that the four axioms uniquely identify the group index on the unanimity games defined, for each , as
\begin{equation}
    \label{def:unanimity_games}
    u_C(S) = \begin{cases}
        1, & \mbox{if }S \supseteq C;
        \\
        0, & \mbox{otherwise}.
    \end{cases}
\end{equation}
We now compute the uniform Shapley coopetition index for the unanimity game \(u_C\) distinguishing three cases:
\begin{description}
    \item[Case 1] \(S \subseteq N \setminus C\) 

    Here \(\mathcal{C}_{SU}^{u_C}(S)=0\) because all players are null.
    \item[Case 2]\(S \subseteq C\)
    
    Here \(\mathcal{C}_{SU}^{u_C}(S) = \frac{1}{c-s+1}\) because the attitude is 1 only when \(T \supseteq C \setminus S\) and zero otherwise. The probability of the first occurrence under \(q^r\) is \(\frac{1}{c-s+1}\).
    \item[Case 3] \(S \cap C \neq \emptyset\), \(S \cap (N \setminus C) \neq \emptyset\), \(|S \cap C|=r\)
    
    Here \(\mathcal{C}_{SU}^{u_C}(S) = \frac{2^{s-1}-2^{s-r}}{2^{s-1}-1} \frac{1}{c-r+1}\) because the attitude will be 1 only when \(T \supseteq C \setminus S\) and for those 2-partitions of \(S\) in which each subset contains a proper subset of \(S \cap C\). Under \(q^r\), the first event happens with probability \( \frac{1}{c-r+1} \), while the second event, under \(p^u\), happens with probability \(\frac{2^{s-1}-2^{s-r}}{2^{s-1}-1}  \). 
   Indeed, from the sure event we subtract the probability that the partition has null attitude value. These are all the partitions obtained by attaching the whole \(S \cap C\) to each set of the 2-partitions of \(S \cap C^c\), plus the partition \(\{S \cap C, S \cap C^c\}\). The probability is that
        \[
        1 - \frac{(2^{s-r-1}-1)2}{2^{s-1}-1} - \frac{1}{2^{s-1}-1} = \frac{2^{s-1}-2^{s-r}}{2^{s-1}-1}.
        \]
  \end{description}      

We now show that under Axioms \eqref{ax-L} (L), \eqref{ax-SB} (SB), \eqref{ax-ICU} (ICU), and \eqref{ax-EN} (EN), the group index \(\xi\) coincides with the uniform Shapley coopetition index in all three cases.
\begin{description}
    \item[Case 1] A repeated application of Axiom \eqref{ax-ICU} (ICU) yield
    \[
    \xi(S;N,u_C)=\prod_{t=1}^s \frac{2^{s-t}-2}{2^{s-t}-1} \xi(\emptyset;C,u_C^{-N \setminus C}) = 0 = \mathcal{C}_{SU}^{u_C}(S).
    \]
    \item[Case 2] An iterated application of Axiom \eqref{ax-EN} (EN), followed by Axiom \eqref{ax-SB} (SB), yields
    \[
    \xi(S;N,u_C) = \xi(S;C,u_C) = \frac{1}{c-s+1}= \mathcal{C}_{SU}^{u_C}(S).
    \]
\item[Case 3]An iterated application of Axiom \eqref{ax-ICU} (ICU) and Axiom \eqref{ax-EN} (EN) and, followed by Axiom \eqref{ax-SB} (SB), yields
\[
\xi(S;N,u_C) = \prod_{t=1}^{s-r} \frac{2^{s-t}-2}{2^{s-t}-1} \xi(S \cap C,C,u_C^{-N \setminus C}) = 2^{s-r} \frac{2^{c-1}-1}{2^{s-1}-1} \frac{1}{c-r+1} = \mathcal{C}_{SU}^{u_C}(S).
\]
\end{description}
Any TU-game can be written as a linear combination of unanimity games and Axioms \eqref{ax-L} (L) allows to extend the results to all TU-games. 
\end{proof}

We now show that the axioms of Theorem \ref{theo: characterization} are independent. 

\begin{theorem}
    Axiom \ref{ax-L} (L), Axiom \ref{ax-SB} (SB), Axiom \ref{ax-EN} (EN) and Axiom \ref{ax-ICU} (ICU) are logically independent.  
\end{theorem}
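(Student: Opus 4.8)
The plan is to prove independence in the standard way: for each of the four axioms I would exhibit a group index obeying the other three but violating the one at issue. Four witnesses suffice. \textbf{(L):} take $\xi_{\mathrm L}(S;N,v):=v(N)\,\mathcal C^v_{SU}(S)$. \textbf{(SB):} take $\xi_{\mathrm{SB}}:=2\,\mathcal C_{SU}$ (the identically zero index also works). \textbf{(ICU):} take the Shapley--Owen coopetition index $\mathcal C_{SO}=\mathcal C^v_{p^r,q^r}$. \textbf{(EN):} take $\mathcal C^v_{p^u,q}$ for a carefully chosen external family $q$.

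The first two verifications are short. Since $u_C(N)=1$ for every $C\subseteq N$, the index $\xi_{\mathrm L}$ coincides with $\mathcal C_{SU}$ on all unanimity games and so satisfies (SB); since $v(N)=v(N\setminus i)$ whenever $i$ is a null player, deleting a null player does not touch the prefactor $v(N)$, so $\xi_{\mathrm L}$ inherits (EN) and (ICU) from $\mathcal C_{SU}$ (the latter through Corollary~\ref{cor:Dnull}(i)); and linearity breaks already on scalar multiples, e.g.\ $\xi_{\mathrm L}(S;N,2u_N)=\tfrac{4}{n-s+1}\ne\tfrac{2}{n-s+1}=2\,\xi_{\mathrm L}(S;N,u_N)$ for any nonempty $S\subsetneq N$. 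The index $\xi_{\mathrm{SB}}$ is linear and, being a constant multiple of $\mathcal C_{SU}$, keeps (EN) and (ICU), while $\xi_{\mathrm{SB}}(S;N,u_N)=\tfrac{2}{n-s+1}\ne\tfrac1{n-s+1}$ kills (SB).

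For (ICU) I would use $\mathcal C_{SO}$. It is linear; on $u_N$ the attitude $\mathcal A^{u_N}_{p^r}(S,T)$ equals $1$ precisely when $T=N\setminus S$, so $\mathcal C^{u_N}_{SO}(S)=q^r_S(N\setminus S)=\tfrac1{n-s+1}$ and (SB) holds; and (EN) holds because appending a null player to $T$ leaves $\mathcal A^v_{p^r}(S,T)$ unchanged while the weights $q^r$ telescope correctly under deletion of a player from the ground set. However $\mathcal C_{SO}$ contracts null players through the permutation-based factor of Corollary~\ref{cor:Dnull}(ii), namely $\tfrac{(s+1)(s-2)}{s(s-1)}$, which differs from the uniform factor $\tfrac{2^{s-1}-2}{2^{s-1}-1}$ demanded by (ICU) as soon as $|S|\ge4$ (there, $\tfrac56$ versus $\tfrac67$). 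Choosing $N$ with $|N|=5$, $S=\{1,2,3,4\}$, null players $4$ and $5$, and $v=u_{\{1,2,3\}}$ makes $\mathcal C^{v^{-4}}_{SO}(S\setminus4)=1\ne0$, so the discrepancy of factors actually forces (ICU) to fail.

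The substantive construction, and the step I expect to be the main obstacle, is the (EN) witness, because one must first isolate which feature of the external family drives (ICU) and which drives (EN). The key preliminary fact, a statement about $p^u$ alone, is that when $i\in S$ is null,
\[
\mathcal A^v_{p^u}(S,T)=\bigl(1-p^u(\{S\setminus i,i\})\bigr)\,\mathcal A^{v^{-i}}_{p^u}(S\setminus i,T),\qquad T\subseteq N\setminus S,
\]
which follows by grouping the $2$-partitions of $S$ by the block containing $i$ and using the identity $(1-p^u_S)\,p^u_{S\setminus i}=2\,p^u_S$. It follows that for any external family of the shape $q^N_S(T)=r^{N\setminus S}(T)$ — one distribution $r^M$ on $2^M$ for each finite set $M$ — the index $\mathcal C^v_{p^u,q}$ is automatically linear, satisfies (ICU) because $r^{N\setminus S}=r^{(N\setminus i)\setminus(S\setminus i)}$ makes the two contracted sums identical term by term, and satisfies (SB) provided $r^M(M)=\tfrac1{|M|+1}$; whereas unwinding (EN) for such an index shows it amounts to the \emph{extra} requirement $r^{M\setminus j}(T)=r^M(T)+r^M(T\cup j)$. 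These two demands are genuinely decoupled, and I would exploit the gap: $r^{\emptyset}$ and $r^{\{j\}}$ are forced by (SB), but at $|M|=2$ one may set $r^{\{j,k\}}(\emptyset)=r^{\{j,k\}}(\{j\})=r^{\{j,k\}}(\{j,k\})=\tfrac13$ and $r^{\{j,k\}}(\{k\})=0$ (with any admissible choice for $|M|\ge3$). Then $r^{\{k\}}(\emptyset)=\tfrac12\ne\tfrac13=r^{\{j,k\}}(\emptyset)+r^{\{j,k\}}(\{k\})$, and with $N=\{1,2,3\}$, $S=\{1\}$, $i=3$ a null player, $v=u_{\{1,2\}}$ one computes $\mathcal C^v_{p^u,q}(\{1\};N,v)=\tfrac23\ne\tfrac12=\mathcal C^v_{p^u,q}(\{1\};N\setminus3,v^{-3})$, so (EN) fails while (L), (SB), (ICU) all persist. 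Every remaining verification is a routine computation of the kind already carried out in the previous sections, so I anticipate no further difficulty.
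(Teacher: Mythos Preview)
Your proof is correct, and each of your four witnesses does the job. Your route, however, is quite different from the paper's.

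The paper constructs ad hoc indices by hand: for (L) a formula depending on the \emph{number} of null players in $v$ and in $S$; for (SB) the zero index; for (ICU) the generalized Shapley value $\Phi_{Sh}$; for (EN) an index defined on unanimity games (replacing the factor $\tfrac{1}{c-r+1}$ in $\mathcal C_{SU}^{u_C}$ by $\tfrac{1}{n-s+1}$) and extended by linearity. Your witnesses are instead all built from objects already studied in the paper, which lets you outsource most verifications to Theorem~\ref{theo: characterization}, Theorem~\ref{theo: charShapOwen}, and Corollary~\ref{cor:Dnull}. Your $(L)$ witness $v(N)\,\mathcal C_{SU}^v$ is considerably cleaner than the paper's counting construction; your (ICU) witness $\mathcal C_{SO}$ is slightly heavier than the paper's $\Phi_{Sh}$ (you must exhibit an $s\ge4$ example to separate the two contraction factors, whereas $\Phi_{Sh}$ contracts by $1$ and fails already at $s=2$), but it has the appealing feature of using the companion index directly. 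The most interesting divergence is (EN): you isolate the structural observation that for $q^N_S=r^{N\setminus S}$ the axiom (ICU) depends only on $p^u$ and the identity $N\setminus S=(N\setminus i)\setminus(S\setminus i)$, while (EN) is equivalent to the marginal condition $r^{M\setminus j}(T)=r^M(T)+r^M(T\cup j)$. This decoupling is more explanatory than the paper's direct specification on unanimity games. One small point to tighten: your asymmetric choice of $r^{\{j,k\}}$ presupposes a way to distinguish $j$ from $k$, so you should fix a global linear order on players (or work only on subsets of a fixed $N=\{1,2,3\}$) to make the family $\{r^M\}$ well defined.
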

\begin{proof}
 {\bf No L:} Take $N=1$ and define the $\xi^{L}(S;N,v)=1$ if $v$ is not a null game or $S\neq\emptyset$ and $\xi^{L}(S;N,v)=0$ otherwise. Take $N\geq2$. For any game $(N,v)$, if there are no null players define $\xi^{L}(S;N,v)=\frac{1}{n-s+1}$, otherwise 
 \begin{equation*}
     \xi^{L}(S;N,v)=\frac{1}{(n-m)-(s-z)+1} \left( \frac{2^{s-1}-2^z}{2^{s-1}-1} \right),
 \end{equation*}
 where $m$ is the number of null players in $(N,v)$ and $z$ is the number of null players in $S$. It is immediate to verify that $\xi^{L}$ does not satisfy Axiom \ref{ax-L} (L). By definition $\xi^{L}$ satisfies Axiom \ref{ax-SB} (SB). Take a game $v$ with $m\geq 0$ null players and a coalition $S\subseteq N$ with $0\leq z\leq m$ null players. Suppose $m=1$ and let $i$ be the null player. If $i\in S$ by definition 
 \begin{equation*}
     \xi^{L}(S;N,v)= \frac{1}{n-s+1} \left( \frac{2^{s-1}-2}{2^{s-1}-1}\right)=\xi^{L}(S\setminus i;N\setminus i,v^{-i})\left( \frac{2^{s-1}-2}{2^{s-1}-1}\right).
 \end{equation*}

 If $i\notin S$ by definition 
 \begin{equation*}
     \xi^{L}(S;N,v)= \frac{1}{n-s+1} \left( \frac{2^{s-1}-1}{2^{s-1}-1}\right)=\xi^{L}(S\setminus i;N\setminus i,v^{-i}).
 \end{equation*}
Suppose $m\geq 2$ and let $i$ be the null player.
If $i\in S$,  by definition
\begin{multline*}
     \xi^{L}(S;N,v)= \frac{1}{(n-m)-(s-z)+1} \left( \frac{2^{s-1}-2^z}{2^{s-1}-1}\right)\\
     =\frac{1}{(n-1-(m-1))-(s-1-(z-1))+1} \left( \frac{2^{s-2}-2^{z-1}}{2^{s-2}-1}\right)\left( \frac{2^{s-1}-2}{2^{s-1}-1}\right)\\
     =\xi^{L}(S\setminus i;N\setminus i,v^{-i})\left( \frac{2^{s-1}-2}{2^{s-1}-1}\right), 
 \end{multline*}
 and $i\notin S$,  by definition 
\begin{multline*}
     \xi^{L}(S;N,v)= \frac{1}{(n-m)-(s-z)+1} \left( \frac{2^{s-1}-2^z}{2^{s-1}-1}\right)\\
     =\frac{1}{(n-1-(m-1))-(s-z))+1} \left( \frac{2^{s-1}-2^{z}}{2^{s-1}-1}\right)    =\xi^{L}(S\setminus i;N\setminus i,v^{-i}).
 \end{multline*}
Then Axiom  \eqref{ax-ICU} (ICU)  and  Axiom \eqref{ax-EN} (EN) are satisfied.

\medskip 
 {\bf No SB:}
For any $(N,v)$ and $S\subseteq N$ define $\xi^{SB}(S;N,v)=0$. It is immediate to verify that $\xi^{SB}$ satisfies all the axioms, but not Axiom \ref{ax-SB} (SB). 
\medskip
\\
 {\bf No INPCU:}
For any $(N,v)$ and $S\subseteq N$ define $\xi^{INPCU}(S;N,v)=\Phi_{Sh}(S;N,v)$ the generalized Shapley value. It is immediate to verify that $\xi^{INPCU}$ satisfies all the axioms, but not Axiom \ref{ax-ICU} (ICU). 
\medskip
\\
 {\bf No ENPN:} For any $\emptyset \neq C\subsetneq N$, define 

 \begin{equation*}
    \xi^{ENPN}(S;N,u_C)=\begin{cases}
        \frac{1}{n-s+1} \left( \frac{2^{s-1}-2^{s-r}}{2^{s-1}-1} \right), & \text{if } S\cap C\neq \emptyset \text{ and } r=|S\cap C| < s;\\
        0, & \text{otherwise}. 
    \end{cases} 
 \end{equation*}
Moreover, define $\xi^{ENPN}(S;N,u_N)=\frac{1}{n-s+1}$ and extend the solution by linearity. By construction $\xi^{NPN}$ satisfies Axioms \eqref{ax-L} (L) and \eqref{ax-SB} (SB). Arguing as in the first step of the proof we obtain that $\xi^{ENPN}$ satisfies Axiom \eqref{ax-ICU} (ICU), but not  Axiom \eqref{ax-EN} (EN). Indeed, if a null player $i$ does not belong to the coalition $S$, by definition 
$$
\xi^{ENPN}(S\setminus i; N\setminus i,u_C^{-i})=\frac{1}{n-1-s+1} \left( \frac{2^{s-1}-2^{s-r}}{2^{s-1}-1} \right) \neq \xi^{ENPN}(S;N,u_C).
$$
\end{proof}

We now consider the Shapley-Owen coopetition index obtained by considering the family of distributions \(p^r\). We axiomatize this index by changing Axiom (ICU) \ref{ax-ICU} with Axiom (ICP) \ref{ax-ICP}.

\begin{theorem}\label{theo: charShapOwen}
    The Shapley-Owen coopetition index is the unique group indexs that satisfies Axioms \eqref{ax-L} (L), \eqref{ax-SB} (SB), \eqref{ax-EN} (EN), and \eqref{ax-ICP} (ICP).
\end{theorem}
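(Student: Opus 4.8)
The plan is to follow the two‑step strategy of the proof of Theorem~\ref{theo: characterization} verbatim, with the internal uniform family $p^{u}$ replaced by the permutation‑induced family $p^{r}$ and Axiom~\eqref{ax-ICU} (ICU) replaced by Axiom~\eqref{ax-ICP} (ICP).

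\emph{Step 1: the Shapley--Owen index satisfies the four axioms.} Linearity (L) holds because $\mathcal{A}^{v}_{p^{r}}(S,T)$ is a linear combination of the worths $\{v(\cdot)\}$; equivalently it follows from the $p^{r}$‑analogue of \eqref{eq:newchar_SU}, namely $\mathcal{C}^{v}_{SO}(S)=\Phi^{v}_{Sh}(S)-\sum_{\emptyset\neq R\subsetneq S}p^{r}_{S}(\{R,S\setminus R\})\,\Phi^{-S\setminus R}_{Sh}(R)$, which is the instance of \eqref{eq:newchar} obtained from $q^{r}$ (the permutation family $q^{r}$ satisfies the consistency condition \eqref{eq:charac}). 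For (SB) one evaluates $\mathcal{C}^{u_{N}}_{SO}(S)$ directly: in $u_{N}$ every proper subcoalition has worth $0$, so $\mathcal{A}^{u_{N}}_{p^{r}}(S,T)=u_{N}(S\cup T)$, which is $1$ exactly when $T=N\setminus S$; hence $\mathcal{C}^{u_{N}}_{SO}(S)=q^{r}_{S}(N\setminus S)=\tfrac{1}{n-s+1}$. Axiom (EN) is obtained, exactly as in Theorem~\ref{theo: characterization}, from the displayed formula together with the standard invariance of the generalized Shapley value under deletion of an external null player and the fact that the $p^{r}$‑weights do not depend on $n$. Finally (ICP) is the combination of part~\emph{ii)} of Corollary~\ref{cor:Dnull} (equation \eqref{eq:addnull_pperm}), applied with its $S$ equal to our $S\setminus i$, with (EN) used to rewrite $\mathcal{C}^{v}_{SO}(S\setminus i;N)$ as $\mathcal{C}^{v^{-i}}_{SO}(S\setminus i;N\setminus i)$.

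\emph{Step 2: uniqueness.} The unanimity games $\{u_{C}:\emptyset\neq C\subseteq N\}$ form a basis of $\mathcal{G}^{N}$, so by (L) it suffices to pin down any $\xi$ obeying the four axioms on each $u_{C}$. For $C=N$, (SB) gives $\xi(S;N,u_{N})=\tfrac{1}{n-s+1}=\mathcal{C}^{u_{N}}_{SO}(S)$. For $\emptyset\neq C\subsetneq N$ put $r:=|S\cap C|$ and argue by the three cases of Theorem~\ref{theo: characterization}. If $r=0$, every player of $S$ is null in $u_{C}$, so $\mathcal{C}^{u_{C}}_{SO}(S)=0$; on the other side, after deleting the external null players by (EN), iterated (ICP) peels the null players of $S$, and since the source‑size‑$2$ coefficient $\tfrac{(2+1)(2-2)}{2(2-1)}$ vanishes we get $\xi(S;N,u_{C})=0$ (the $|S|=1$ instance being read, as in Theorem~\ref{theo: characterization}, as $\xi(\{i\};N,v)=0$ for a null player $i$). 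If $r=s$, i.e.\ $S\subseteq C$, then $\mathcal{A}^{u_{C}}_{p^{r}}(S,T)=u_{C}(S\cup T)$ equals $1$ precisely when $T\supseteq C\setminus S$, an event of $q^{r}_{S}$‑probability $\tfrac{1}{c-s+1}$, so $\mathcal{C}^{u_{C}}_{SO}(S)=\tfrac{1}{c-s+1}$; on the axiomatic side (EN) followed by (SB) --- $u_{C}$ being the grand‑coalition unanimity game on $C$ --- gives the same value. This case is word‑for‑word the Uniform Shapley one, since it uses only the shared family $q^{r}$.

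\emph{The main case $1\le r\le s-1$, and the expected obstacle.} Here $\mathcal{A}^{u_{C}}_{p^{r}}(S,T)$ equals $1-P$ when $T\supseteq C\setminus S$ and $0$ otherwise, where $P$ is the $p^{r}$‑mass of those $\pi=\{S_{1},S_{2}\}\in\Pi_{2}(S)$ for which one block contains all of $S\cap C$ (equivalently, the other block is a nonempty subset of $S\cap C^{c}$). Grouping such partitions by the size $\rho\in\{1,\dots,s-r\}$ of that small block and using a hockey‑stick identity, $P=\tfrac{2}{(s-1)\,s!}\sum_{\rho=1}^{s-r}\binom{s-r}{\rho}\rho!(s-\rho)!=\tfrac{2(s-r)}{(s-1)(r+1)}$, hence $1-P=\tfrac{(s+1)(r-1)}{(s-1)(r+1)}$; combined with the $q^{r}_{S}$‑probability $\tfrac{1}{c-r+1}$ of $\{T\supseteq C\setminus S\}$ this gives $\mathcal{C}^{u_{C}}_{SO}(S)=\tfrac{(s+1)(r-1)}{(s-1)(r+1)(c-r+1)}$. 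On the axiomatic side, (EN) removes the external null players and $s-r$ applications of (ICP) peel the null players $S\cap C^{c}$, producing the telescoping factor $\prod_{m=r+1}^{s}\tfrac{(m+1)(m-2)}{m(m-1)}=\tfrac{r-1}{s-1}\cdot\tfrac{s+1}{r+1}$, after which (EN) and then (SB) on the grand‑coalition unanimity game on $C$ give $\xi(S;N,u_{C})=\tfrac{(r-1)(s+1)}{(s-1)(r+1)}\cdot\tfrac{1}{c-r+1}$, matching. By (L) this identifies $\xi$ with $\mathcal{C}_{SO}$ on all of $\mathcal{G}^{N}$. The only genuinely non‑routine point is precisely this reconciliation in the case $1\le r\le s-1$: correctly singling out the zero‑attitude $2$‑partitions and evaluating their $p^{r}$‑mass, and checking that the product of the $s-r$ successive (ICP)‑coefficients collapses to exactly the same rational function $\tfrac{(s+1)(r-1)}{(s-1)(r+1)}$; everything else transcribes the proof of Theorem~\ref{theo: characterization}.
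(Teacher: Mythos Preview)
Your proof is correct and follows the paper's approach closely: verify the four axioms for $\mathcal{C}_{SO}$, then show uniqueness on the unanimity basis by splitting into the same three cases according to $r=|S\cap C|$ and combining (EN), (ICP), and (SB). The one substantive difference is in Case~3. The paper isolates a separate combinatorial lemma, writing the $p^{r}$-mass of the zero-attitude partitions as $\frac{2r!(s-r)!}{s!(s-1)}+\sum_{t=1}^{s-r-1}\binom{s-r}{t}\frac{t!(s-t)!+(t+r)!(s-t-r)!}{s!(s-1)}$ (enumerating via $2$-partitions of $S\cap C^{c}$ with $S\cap C$ attached to either block, plus the partition $\{S\cap C,S\cap C^{c}\}$), and then reduces this to $\frac{2(s-r)}{(s-1)(r+1)}$ through a change of index and the hockey-stick identity. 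Your enumeration is more direct: you sum over the ``small'' block $B\subseteq S\cap C^{c}$ and apply hockey-stick once, reaching $P=\frac{2(s-r)}{(s-1)(r+1)}$ in one line. Both routes land on $1-P=\frac{(s+1)(r-1)}{(s-1)(r+1)}$, and your telescoping product $\prod_{m=r+1}^{s}\frac{(m+1)(m-2)}{m(m-1)}$ is the correct range (the paper's displayed bounds $\prod_{t=s-r}^{s}$ are a typo). Your treatment of (ICP) via Corollary~\ref{cor:Dnull}~\emph{ii)} plus (EN) is also slightly more explicit than the paper's, though note that for $|S|=2$ one should appeal to Proposition~\ref{prop:DnullS1} rather than the corollary.
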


\begin{proof}
The proof retraces the steps of the proof of Theorem \ref{theo: characterization}, but for completeness we write it explicitly. First of all we show that the Shapley Uniform coopetition index satisfies all the four axioms. Axioms \eqref{ax-SB} (SB) and \eqref{ax-EN} (EN) are easy and left to the reader (but we may write them if we have time). Axiom \eqref{ax-L} (L) can be easily obtained from formula \eqref{eq:newchar_SU} and from the linearity of each group Shapley value. Axiom \eqref{ax-ICP} (ICP) is a consequence of part {\em ii)} from Corollary \ref{cor:Dnull}. In order to prove that the four axioms uniquely identify the group index on the unanimity games  we need the following technical result. 
\begin{lemma}
    For any positive integers \(r < s\)
    \begin{equation}
        \label{lem:combinatorial}
            1 - \frac{2r!(s-r)!}{s!(s-1)} - \sum_{t=1}^{s-r-1} \binom{s-r}{t} \frac{t!(s-t)!+(t+r)!(s-t-r)!}{s!(s-1)} = \frac{(s+1)(r-1)}{(s-1)(r+1)}.
    \end{equation}
\end{lemma}
\begin{proof} 
Take two positive integers $r<s$. First, observe that, for each \(t\leq s-r-1\),
\begin{equation}\label{eq:riso0}
\sum_{t=1}^{s-r-1}\binom{s-r}{t}\frac{(t+r)!(s-t-r)!}{s!(s-1)}
   =\frac{(s-r)!}{s!(s-1)}\sum_{t=1}^{s-r-1}\prod_{\ell=1}^r(t+\ell),
\end{equation}
and
\begin{equation}\label{eq:riso1}
    \sum_{t=1}^{s-r-1}\binom{s-r}{t}\frac{(s-t)!(t)!}{s!(s-1)}
   =\frac{(s-r)!}{s!(s-1)}\sum_{t=1}^{s-r-1}\prod_{\ell=0}^{r-1}(s-t-\ell).
\end{equation}
Take $t'=s-r-t$ and $\ell'=r-\ell$, then
\begin{equation}\label{eq:riso11}
    \sum_{t=1}^{s-r-1}\prod_{\ell=0}^{r-1}(s-t-\ell) = \sum_{t'=s-r-1}^{1}\prod_{\ell=0}^{r-1}(t'+r-\ell) = \sum_{t'=1}^{s-r-1}\prod_{\ell'=1}^{r}(t'+\ell').
\end{equation}
Define
\[
E(s,r):= 1
-\frac{2\,r!\,(s-r)!}{s!\,(s-1)}
-\sum_{t=1}^{s-r-1}
\binom{s-r}{t}
\frac{t!(s-t)!+(t+r)!(s-t-r)!}{(s-1)\,s!}\,.
\]
Using \eqref{eq:riso0},\eqref{eq:riso1}, and \eqref{eq:riso11} we obtain
\[
E(s,r)=1-\frac{2(s-r)!}{s!(s-1)}\Big(r!+\sum_{t=1}^{s-r-1}\prod_{\ell=1}^r(t+\ell)\Big) = 1-\frac{2(s-r)!}{s!(s-1)}\sum_{t=0}^{s-r-1}\prod_{\ell=1}^r(t+\ell).
\]
Moreover
\[
\sum_{t=0}^{s-r-1}\prod_{\ell=1}^r(t+\ell) = \sum_{t=0}^{s-r-1}\frac{(t+r)!}{t!}=r!\sum_{t=0}^{s-r-1}\binom{t+r}{t}=r!\binom{s}{s-r-1},
\]
where the last equality follows from the combinatorial formula 5.9 in \cite{Graham1994concrete}.
Then
\begin{equation*}
    E(s,r)=1-\frac{2(s-r)}{(s-1)(r+1)}=\frac{(s+1)(r-1)}{(s-1)(r+1)}.
\end{equation*}
\end{proof}

We now compute the Shapley-Owen coopetition index for the unanimity game \(u_C\) distinguishing three cases:
\begin{description}
    \item[Case 1] \(S \subseteq N \setminus C\) 

    Here \(\mathcal{C}_{SO}^{u_C}(S)=0\) because all players are null.
    \item[Case 2]\(S \subseteq C\)
    
    Here \(\mathcal{C}_{SO}^{u_C}(S) = \frac{1}{c-s+1}\) because the attitude is 1 only when \(T \supseteq C \setminus S\) and zero otherwise. The probability of the first occurrence under \(q^r\) is \(\frac{1}{c-s+1}\).
    \item[Case 3] \(S \cap C \neq \emptyset\), \(S \cap (N \setminus C) \neq \emptyset\), \(|S \cap C|=r\)
    
    Here \(\mathcal{C}_{SO}^{u_C}(S) = \frac{(s+1)(r-1)}{(s-1)(r+1)(c-r+1)} \) because the attitude will be 1 only for when \(T \supseteq C \setminus S\) and for those 2-partitions of \(S\) in which each subset contains a proper subset of \(S \cap C\). Under \(q^r\), the first event happens with probability \( \frac{1}{c-r+1} \), while the second event, under \(p^u\), happens with probability \(\frac{(s+1)(r-1)}{(s-1)(r+1)}  \). We use the same method  from Case 3 of Theorem \ref{theo: characterization}, to have
    \[
    1 - \frac{2r!(s-r)!}{s!(s-1)} - \sum_{t=1}^{s-r-1} \binom{s-r}{t} \frac{t!(s-t)!+(t+r)!(s-t-r)!}{s!(s-1)}, 
    \]
    that, by Lemma \ref{lem:combinatorial}, equals \(\frac{(s+1)(r-1)}{(s-1)(r+1)}\).
\end{description}
We now show that under Axioms \eqref{ax-L} (L), \eqref{ax-SB} (SB), \eqref{ax-ICP} (ICP), and \eqref{ax-EN} (EN), the group index \(\xi\) coincides with the Shapley-Owen coopetition index in all three cases.
\begin{description}
    \item[Case 1] A repeated application of \eqref{ax-ICP} (ICP) yield
    \[
    \xi(S;N,u_C)=\prod_{t=s-r}^s \frac{(t+1)(t-2)}{t(t-1)} \xi(\emptyset,C,u_C^{-N \setminus C}) = 0 = \mathcal{C}_{SO}^{u_C}(S).
    \]
    \item[Case 2] An iterated application of \eqref{ax-EN} (EN), followed by \eqref{ax-SB} (SB), yields
    \[
    \xi(S;N,u_C) = \xi(S;C,u_C) = \frac{1}{c-s+1}= \mathcal{C}_{SO}^{u_C}(S).
    \]
\item[Case 3]An iterated application of Axioms \eqref{ax-ICP} (ICP), and \eqref{ax-EN} (EN) and, followed by \eqref{ax-SB} (SB), yields
\[
\xi(S;N,u_C) = \prod_{t=s-r}^s \frac{(t+1)(t-2)}{t(t-1)} \xi(S \cap C;C,u_C^{-N \setminus C}) = \frac{(s+1)(r-1)}{(s-1)(r+1)} \frac{1}{c-r+1} = \mathcal{C}_{SO}^{u_C}(S).
\]
\end{description}
Any TU-game can be written as a linear combination of unanimity games and Axiom \eqref{ax-L} (L) allows to extend the results to all TU-games. 
\end{proof}

Finally, we show that, just as in the uniform case, the axioms are independent. The proof is quite similar to the previous one -- with the due changes for the contraction factors. In any case, we give the full proof here.

\begin{theorem} \label{th:indep_permbased}
    Axiom \ref{ax-L} (L), Axiom \ref{ax-SB} (SB), Axiom \ref{ax-EN} (EN) and Axiom \ref{ax-ICP} (ICP) are logically independent  
\end{theorem}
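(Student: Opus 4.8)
The plan is to mirror exactly the structure of the independence proof for Theorem~\ref{theo: characterization}, producing four group indices, each violating precisely one of the axioms L, SB, EN, ICP while satisfying the other three. The only substantive change is that every contraction factor $\frac{2^{s-1}-2}{2^{s-1}-1}$ (and its relatives $\frac{2^{s-1}-2^z}{2^{s-1}-1}$) must be replaced by the permutation-based factor $1-p^r(\{S\setminus i,i\}) = \frac{(s+1)(s-2)}{s(s-1)}$, which, iterated over $z$ null players removed from a coalition of final size $s-z$, yields a telescoping product that I will need to evaluate in closed form.

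First I would handle \textbf{No SB}: take $\xi^{SB}\equiv 0$; it trivially satisfies L, EN, and ICP (all equalities read $0=0$), and fails SB since $\frac{1}{n-s+1}\neq 0$. Next \textbf{No ICP}: take $\xi^{ICP}(S;N,v)=\Phi^v_{Sh}(S)$, the generalized Shapley value; linearity and external null player neutrality are standard properties of semivalues, SB holds because $\Phi^{u_N}_{Sh}(S)=\frac{1}{n-s+1}$ (a known value), and ICP fails because the Shapley value of a coalition is unchanged by removing an internal null player, contradicting the nontrivial factor $\frac{(s+1)(s-2)}{s(s-1)}$. For \textbf{No EN}, I would copy the construction $\xi^{ENPN}(S;N,u_C)$ from the uniform case but replace $\frac{2^{s-1}-2^{s-r}}{2^{s-1}-1}$ by $\frac{(s+1)(r-1)}{(s-1)(r+1)}$ when $S\cap C\neq\emptyset$ and $r=|S\cap C|<s$, set it to $\frac{1}{n-s+1}$ on $u_N$, $0$ otherwise, and extend by linearity; the verification that ICP holds follows the same "attach $S\cap C$ to sub-2-partitions" bookkeeping already used in Case~3 of Theorem~\ref{theo: charShapOwen}, while EN visibly fails because dropping an external null player changes $n$ but not $r$ or $s$.

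The delicate case is \textbf{No L}. Following the uniform template, for $n\geq 2$ and a game $v$ with $m$ null players, $z$ of them in $S$, I would define $\xi^L(S;N,v)=\frac{1}{(n-m)-(s-z)+1}\cdot G(s,z)$ where $G(s,z)$ is the product of permutation contraction factors accumulated while stripping $z$ null players down from size $s$, namely $G(s,z)=\prod_{k=0}^{z-1}\frac{(s-k+1)(s-k-2)}{(s-k)(s-k-1)}$, with $G(s,0)=1$ and the $n=1$ boundary handled separately exactly as before. One checks L fails (the index is not additive across $u_C$'s with disjoint supports), SB holds by the $m=0$ clause, and EN holds since removing an external null player decreases both $n$ and $m$ by one, leaving everything else fixed. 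The main obstacle, and the only real computation, is verifying ICP: removing an internal null player sends $(n,m,s,z)\mapsto(n-1,m-1,s-1,z-1)$, so the denominator $(n-m)-(s-z)+1$ is invariant, and I must show $G(s,z) = G(s-1,z-1)\cdot\frac{(s+1)(s-2)}{s(s-1)}$, i.e.\ that the telescoping product picks up exactly one extra leading factor corresponding to the full size $s$. This is a reindexing identity that should fall out directly from the definition of $G$, but it is the step where one must be careful not to misalign the product bounds; once it is in hand, the rest of the proof is routine.

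\begin{proof}
We exhibit four group indices, each satisfying three of the four axioms but not the fourth.

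\medskip
\textbf{No SB:} For any $(N,v)$ and $S\subseteq N$ define $\xi^{SB}(S;N,v)=0$. It is immediate that $\xi^{SB}$ satisfies Axioms \ref{ax-L} (L), \ref{ax-EN} (EN) and \ref{ax-ICP} (ICP), while it violates Axiom \ref{ax-SB} (SB).

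\medskip
\textbf{No ICP:} For any $(N,v)$ and $S\subseteq N$ define $\xi^{ICP}(S;N,v)=\Phi^v_{Sh}(S)$, the generalized Shapley value. Linearity and external null player neutrality are classical properties of semivalues, and $\Phi^{u_N}_{Sh}(S)=\frac{1}{n-s+1}$, so Axioms \ref{ax-L} (L), \ref{ax-SB} (SB) and \ref{ax-EN} (EN) hold. However, the Shapley value of a coalition is unaffected by the removal of an internal null player, so Axiom \ref{ax-ICP} (ICP) fails.

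\medskip
\textbf{No EN:} For any $\emptyset\neq C\subsetneq N$, define
\begin{equation*}
   \xi^{EN}(S;N,u_C)=\begin{cases}
       \dfrac{1}{n-s+1}\left(\dfrac{(s+1)(r-1)}{(s-1)(r+1)}\right), & \text{if } S\cap C\neq\emptyset \text{ and } r=|S\cap C|<s;\\[2mm]
       0, & \text{otherwise},
   \end{cases}
\end{equation*}
set $\xi^{EN}(S;N,u_N)=\frac{1}{n-s+1}$, and extend by linearity. By construction $\xi^{EN}$ satisfies Axioms \ref{ax-L} (L) and \ref{ax-SB} (SB). Arguing as in Case 3 of the proof of Theorem \ref{theo: charShapOwen} (subtracting, from the sure event, the 2-partitions with null attitude, which by Lemma \ref{lem:combinatorial} leaves a factor $\frac{(s+1)(r-1)}{(s-1)(r+1)}$), one checks that $\xi^{EN}$ satisfies Axiom \ref{ax-ICP} (ICP). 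Yet Axiom \ref{ax-EN} (EN) fails: if a null player $i\notin S$, then
\[
\xi^{EN}(S\setminus i;N\setminus i,u_C^{-i})=\frac{1}{n-1-s+1}\left(\frac{(s+1)(r-1)}{(s-1)(r+1)}\right)\neq\xi^{EN}(S;N,u_C).
\]

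\medskip
\textbf{No L:} Take $N=\{1\}$ and define $\xi^{L}(S;N,v)=1$ if $v$ is not the null game or $S\neq\emptyset$, and $\xi^{L}(S;N,v)=0$ otherwise. Take $n\geq 2$. For a game $(N,v)$ with $m$ null players, of which $z$ belong to $S$, define
\begin{equation*}
   \xi^{L}(S;N,v)=\frac{1}{(n-m)-(s-z)+1}\,\prod_{k=0}^{z-1}\frac{(s-k+1)(s-k-2)}{(s-k)(s-k-1)},
\end{equation*}
with the empty product ($z=0$) equal to $1$. Since this index is not additive over unanimity games with disjoint carriers, Axiom \ref{ax-L} (L) fails. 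When $m=0$ the formula gives $\frac{1}{n-s+1}$, so Axiom \ref{ax-SB} (SB) holds. If a null player $i\notin S$, removing $i$ sends $(n,m,s,z)\mapsto(n-1,m-1,s,z)$, leaving both the fraction and the product unchanged, so Axiom \ref{ax-EN} (EN) holds. Finally, if a null player $i\in S$, removing $i$ sends $(n,m,s,z)\mapsto(n-1,m-1,s-1,z-1)$; the denominator $(n-m)-(s-z)+1$ is invariant, while
\[
\prod_{k=0}^{z-1}\frac{(s-k+1)(s-k-2)}{(s-k)(s-k-1)} = \frac{(s+1)(s-2)}{s(s-1)}\prod_{k=0}^{z-2}\frac{((s-1)-k+1)((s-1)-k-2)}{((s-1)-k)((s-1)-k-1)},
\]
so $\xi^{L}(S;N,v)=\frac{(s+1)(s-2)}{s(s-1)}\,\xi^{L}(S\setminus i;N\setminus i,v^{-i})$, which is Axiom \ref{ax-ICP} (ICP). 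Hence $\xi^{L}$ satisfies all axioms except L.
\end{proof}
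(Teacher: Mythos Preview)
Your proposal is correct and follows the paper's proof essentially verbatim: the four counterexamples, their definitions, and the verifications are the same. The only cosmetic difference is in the \textbf{No L} construction, where you leave the iterated contraction factor as the explicit product $\prod_{k=0}^{z-1}\frac{(s-k+1)(s-k-2)}{(s-k)(s-k-1)}$, while the paper writes it in the telescoped closed form $\frac{(s+1)(s-z-1)}{(s-1)(s-z+1)}$; these coincide, and your reindexing check of ICP is exactly the computation the paper performs with the closed form.
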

\begin{proof}
 {\bf No L:} Take $N=1$ and define the $\xi^{L}(S;N,v)=1$ if $v$ is not a null game or $S\neq\emptyset$ and $\xi^{L}(S;N,v)=0$ otherwise. Take $N\geq2$. For any game $(N,v)$, if there are no null players define $\xi^{L}(S;N,v)=\frac{1}{n-s+1}$, otherwise 
 \begin{equation*}
     \xi^{L}(S;N,v)=\frac{1}{(n-m)-(s-z)+1} \frac{(s+1)(s-z-1)}{(s-1)(s-z+1)} ,
 \end{equation*}
 where $m$ is the number of null players in $(N,v)$ and $z$ is the number of null players in $S$. It is immediate to verify that $\xi^{L}$ does not satisfy Axiom \ref{ax-L} (L). By definition $\xi^{L}$ satisfies Axiom \ref{ax-SB} (SB). Take a game $v$ with $m\geq 0$ null players and a coalition $S\subseteq N$ with $0\leq z\leq m$ null players. Suppose $m=1$ and let $i$ be a null player. If $i\in S$, by definition 
 \begin{equation*}
     \xi^{L}(S;N,v)= \frac{1}{n-s+1} \frac{(s+1)(s-2)}{(s-1)s}=\xi^{L}(S\setminus i;N\setminus i,v^{-i})\frac{(s+1)(s-2)}{(s-1)s}.
 \end{equation*}
 If $i\notin S$, by definition 
 \begin{equation*}
     \xi^{L}(S;N,v)= \frac{1}{n-s+1} \frac{(s+1)(s-1)}{(s-1)(s+1)}=\xi^{L}(S\setminus i;N\setminus i,v^{-i}).
 \end{equation*}
Suppose $m\geq 2$ and let $i$ be a null player.
If $i\in S$,  by definition
\begin{multline*}
     \xi^{L}(S;N,v)= \frac{1}{(n-m)-(s-z)+1} \frac{(s+1)(s-z-1)}{(s-1)(s-z+1)}\\
     =\frac{1}{(n-1-(m-1))-(s-1-(z-1))+1} \frac{(s)(s-1-(z-1)-1)}{(s-1)(s-z+1)}\frac{(s-2)(s+1)}{s(s-1)}\\
     =\xi^{L}(S\setminus i;N\setminus i,v^{-i})\frac{(s-2)(s+1)}{s(s-1)}, 
 \end{multline*}
 and $i\notin S$,  by definition 
\begin{multline*}
     \xi^{L}(S;N,v)= \frac{1}{(n-m)-(s-z)+1} \frac{(s+1)(s-1)}{(s-1)(s+1)}\\
     =\frac{1}{(n-1-(m-1))-(s-z))+1} \frac{(s+1)(s-1)}{(s-1)(s+1)} =\xi^{L}(S\setminus i;N\setminus i,v^{-i}).
 \end{multline*}
Then Axiom  \eqref{ax-ICP} (ICU)  and  Axiom \eqref{ax-EN} (EN) are satisfied.
\medskip 
\\
 {\bf No SB:}
For any $(N,v)$ and $S\subseteq N$ define $\xi^{SB}(S;N,v)=0$. It is immediate to verify that $\xi^{SB}$ satisfies all the axioms, but Axiom \ref{ax-SB} (SB). 
\medskip
\\
 {\bf No INPCP:}
For any $(N,v)$ and $S\subseteq N$ define $\xi^{INPCP}(S;N,v)=\Phi_{Sh}(S;N,v)$ the generalized Shapley value. It is immediate to verify that $\xi^{INPC}$ satisfies all the axioms, but not Axiom \ref{ax-ICP} (ICP). 
\medskip
\\
 {\bf No ENPN:} For any $\emptyset \neq C\subsetneq N$, define 

 \begin{equation*}
    \xi^{ENPN}(S;N,u_C)=\begin{cases}
        \frac{1}{(n-s+1)} \frac{(s+1)(r-1)}{(s-1)(r+1)}, & \text{if } S\cap C\neq \emptyset \text{ and } r=|S\cap C| < s;\\
        0, & \text{otherwise}. 
    \end{cases} 
 \end{equation*}
Moreover, define $\xi^{ENPN}(S,N,u_N)=\frac{1}{n-s+1}$ and extend the solution by linearity. By construction $\xi^{NPN}$ satisfies Axioms \eqref{ax-L} (L) and \eqref{ax-SB} (SB). Arguing as in the first step of the proof we obtain that $\xi^{ENPN}$ satisfies Axiom \eqref{ax-ICP} (ICP), but not  Axiom \eqref{ax-EN} (EN). Indeed, if a null player $i$ does not belong to the coalition $S$, by definition 
$$
\xi^{ENPN}(S\setminus i;N\setminus i,u_C^{-i})=\frac{1}{(n-1-s+1)} \frac{(s+1)(r-1)}{(s-1)(r+1)} \neq \xi^{ENPN}(S;N,u_C).
$$
\end{proof}

\section{Conclusions}
This work generalizes the coopetition index to monotone TU-games and to coalitions of all sizes, including singletons. The extension clarifies the relationship between coopetition and classical group values, showing that the index is bounded by the corresponding generalized semivalue and coincides with it for individual players. The analysis identifies structural conditions under which a coalition behaves as fully cooperative or fully antagonistic, and introduces an absolute, scale-free measure of coopetition.

This work also provides an axiomatic characterization of two notable cases: the Uniform Shapley coopetition index and the Shapley–Owen coopetition index. Axioms for Linearity, Symmetry over pure bargaining, and External Null Player Nullity are used in both axiomatizations. The former is standard in the literature, while the second reflects the fact that when all players are essential to obtain a positive value, the members of a coalition 
\(S\) are forced to cooperate and act as a single player; their power is then divided uniformly among the “active” players. The third axiom requires that the power of a coalition is not affected when a null player external to the coalition is removed from the game. A null-player contraction axiom, specific to each index, characterizes the type of solution by specifying that, when a null player of a coalition is removed from the game, the power of the coalition scales proportionally to the probability that a null player is isolated within the coalition.
 

\bibliographystyle{abbrv}
\bibliography{references}

\end{document}